\theoremstyle{remark}
\theoremstyle{hypothesis}
\newtheorem{theorem}{Theorem}
\newtheorem{lemma}{Lemma}
\newtheorem*{hypothesis}{(H)}
\newtheorem{remark}{Remark}
\newtheorem{example}{Example}
\newtheorem{definition}{Definition}
\title{Noncommutative integration, quantum mechanics, Tannaka's theorem for compact groupoids and examples}
\author{Artur O. Lopes\thanks{Instituto de  Matematica e Estatistica - UFRGS - Brazil}\;, Marcos Sebastiani\footnotemark[1]\; and Victor Vargas\thanks{Center for Mathematics - University of Porto - Portugal}}
\begin{document}

\maketitle

{\bf Abstract:} We  consider topological groupoids  in finite and also in a compact settings. In the initial sections, we introduce definitions of typical observables and we studied them in the context of statistical mechanics and quantum mechanics. We exhibit explicit examples and one of them will be   the so-called quantum ratchet.  This is related to Schwinger's algebra of selective measurements. Here we consider  $\mathcal{G}$-kernels, transverse functions, modular functions, and quasi-invariant measures for Haar systems.  Later we present our main result  which is a version of Tannaka's theorem for Hausdorff compact groupoids - extending the original proof of T. Tannaka.

{\footnotesize {\bf Keywords}: Linear representation, Tannaka's Theorem,  groupoids, transverse functions, quasi-invariant measures, quantum mechanics, selective measurements.}

{\footnotesize {\bf Mathematics Subject Classification (2020)}: 47C10, 46N50, 46L52, 81Q37, 22A22.}

\section{Introduction}
\label{introduction-section}

Considering appropriate  definitions in the framework of topological groupoids, our main goal here is  to present some interesting results concerning  finite and Hausdorff compact groupoids. 
We are interested in the study of properties that are potentially applicable to quantum mechanics and dynamical systems.  Schwinger's algebra of selective measurements has a natural interpretation in the formalism of groupoids. In this direction we present some characteristics of observables defined on the so-called ``quantum ratchet" (see  \cite{Ibort2}, \cite{Ibort}, \cite{Ibort1} and \cite{Ibort3} for more details).   Later, we extend that characterizations to the non-finite compact context. We will also describe the so-called $\mathcal{G}$-kernels, transverse functions, modular functions, and quasi-invariant measures for Haar systems in our setting; these are  fundamental concepts in  non-commutative integration.  As far as we know there are not many explicit examples of these entities in the mathematical literature, and we provide them.
One of our main results is the proof of a version of the so-called Tannaka's Theorem in the setting of Hausdorff compact groupoids, which is an extension of the classical result presented in Section XI 11. in \cite{MR1336382} and \cite{MR14095} (see also \cite{Tan}). This theorem  presents a duality between any Hausdorff compact group and a suitable subset of the ring of the Fourier polynomials associated with the group (see \cite{MR1336382} and \cite{MR14095} for details about that result). 

 We refer the reader to \cite{MR548112} and \cite{zbMATH03840637} for general results in Noncommutative integration (see also Section 5 in \cite{MR4378494} for a synthetic presentation of the topic). The characterizations of that kind of observables were well understood in \cite{zbMATH03840637} for the setting of measurable groupoids
 
 As mentioned in \cite{Ibort1}: 
 {\it Contrary to the situation with groups, even if G is locally compact there is not a canonical
(right/left) “invariant” measure on the groupoid, but a family of Haar measures had to be chosen.}

The paper is organized as follows:

In section \ref{topological-groupoids-section}, we introduce the concept of a topological groupoid. Next, we introduce characterizations of some observables defined in that setting, putting special attention on the example of ``the quantum ratchet'', which are presented explicit expressions. After that, we extend those characterizations to the context of non-finite compact groupoids.

In section \ref{tannaka-lemma-section} we present an extension of the so-called Tannaka's Theorem to the context of Hausdorff compact groupoids. First, we present the definitions of linear representation and unitary representation for topological groupoids (in a similar way to the ones presented in \cite{zbMATH07092992}). Later, we present the proof of the main result of the section which guarantees the existence of a suitable epimorphism between any Hausdorff compact groupoid and a subset of the ring of Fourier polynomials associated with the groupoid.

General results concerning groupoids in Quantum Mechanics can be found in \cite{Lat}, \cite{Ni} and  \cite{LCH}.

\section{Topological groupoids}
\label{topological-groupoids-section}

In this section, we are concerned with the study of compact topological groupoids, and some observables defined in that context, which  are widely studied in the settings of statistical mechanics and quantum mechanics. The section is divided into two parts. In the first one we deal with finite groupoids, in fact, we can induce a structure of $C^*$-algebra on the space of continuous functions defined on that class of groupoids and we present a suitable definition of quasi-invariant measure for that context. In the second part, we extend the results presented in the first one to the setting of compact groupoids using the so-called Haar systems.

So, let us first present the definition of {\it topological groupoid}. Consider a topological space $\mathcal{G}$ equipped with a {\it composition map} $(\alpha, \beta) \in \mathcal{G}^{(2)} \mapsto \alpha \cdot \beta \in \mathcal{G}$, where $\mathcal{G}^{(2)} \subset \mathcal{G} \times \mathcal{G}$ is the space of {\it composable pairs} (which is equipped with the subspace topology), we also consider a {\it inverse map} $\alpha \in \mathcal{G} \to \alpha^{-1} \in \mathcal{G}$ and we assume that the maps $\cdot$ and $^{-1}$ are continuous and satisfy the following properties:

\begin{enumerate}[i)]
\item $(\alpha^{-1})^{-1} = \alpha$ for each $\alpha \in \mathcal{G}$.
\item For any $(\alpha, \alpha') \in \mathcal{G}^{(2)}$ and $(\alpha', \alpha'') \in \mathcal{G}^{(2)}$, we have $(\alpha \cdot \alpha', \alpha'') \in \mathcal{G}^{(2)}$, $(\alpha, \alpha' \cdot \alpha'') \in \mathcal{G}^{(2)}$ and $(\alpha \cdot \alpha') \cdot \alpha'' = \alpha \cdot (\alpha' \cdot \alpha'')$. 
\item Each $\alpha \in \mathcal{G}$ satisfies $(\alpha^{-1}, \alpha) \in \mathcal{G}^{(2)}$, and for any $(\alpha, \beta) \in \mathcal{G}^{(2)}$, we have $\alpha^{-1} \cdot (\alpha \cdot \beta) = \beta$.
\item Any $\alpha \in \mathcal{G}$ satisfies $(\alpha, \alpha^{-1}) \in \mathcal{G}^{(2)}$, and for each $(\beta, \alpha) \in \mathcal{G}^{(2)}$, we have $(\beta \cdot \alpha) \cdot \alpha^{-1} = \beta$.
\end{enumerate}

We call any element $\alpha \in \mathcal{G}$ a {\it morphism} of the groupoid, we define the {\it source} of $\alpha$ as $s(\alpha) := \alpha^{-1} \cdot \alpha$ and the {\it range} of $\alpha$ as $r(\alpha) := \alpha \cdot \alpha^{-1}$. The space of {\it objects} of the groupoid $\mathcal{G}$ is defined as $\mathcal{G}^{(0)} := s(\mathcal{G}) = r(\mathcal{G}) \subset \mathcal{G}$ (which is equipped with the subspace topology). We call {\it units} of the groupoid $\mathcal{G}$ to the elements belonging to the space $\mathcal{G}^{(0)}$. The above, because each $\alpha \in \mathcal{G}$ satisfies $\alpha \cdot s(\alpha) = \alpha$ and $r(\alpha) \cdot \alpha = \alpha$. Moreover, since the maps $\cdot$ and $^{-1}$ are continuous, it follows that the maps $\alpha \in \mathcal{G} \mapsto r(\alpha) \in \mathcal{G}^{(0)}$, $\alpha \in \mathcal{G}  \mapsto s(\alpha) \in \mathcal{G}^{(0)}$ are continuous as well.

Given a pair $a, b \in \mathcal{G}^{(0)}$, we use the notations $\mathcal{G}_a := r^{-1}(\{a\})$, $\mathcal{G}^b := s^{-1}(\{b\})$ and $\mathcal{G}_a^b := s^{-1}(\{b\}) \cap r^{-1}(\{a\})$. Actually, note that two morphisms $\alpha \in \mathcal{G}_a^b$ and $\alpha' \in \mathcal{G}_{a'}^{b'}$ are {\it composable} in the form $\alpha \cdot \alpha'$ if, and only if, $b = a'$, i.e., when $s(\alpha) = r(\alpha')$. Furthermore, under the former assumptions we obtain that $\alpha \cdot \alpha' \in \mathcal{G}_a^{b'}$ and the groupoid $\mathcal{G}$ can be expressed as
$$
\mathcal{G} := \bigcup_{a, b \in \mathcal{G}^{(0)}} \mathcal{G}_a^b \;,
$$

\begin{definition} \label{top} Any topological space $\mathcal{G}$ satisfying the conditions mentioned above is called a {\it topological groupoid} and it is usual to use the notation $\mathcal{G} \rightrightarrows \mathcal{G}^{(0)}$.
\end{definition}

On the other hand, given an arbitrary $a \in \mathcal{G}^{(0)}$, the {\it isotropy group} for the object $a$ is defined as the space of all the morphisms belonging to the set $\mathcal{G}_a^a$ equipped with the composition law $\cdot$. Actually, it is easy to check that $\mathcal{G}_a^a$ is a group by the properties of $\mathcal{G}$.

Throughout the paper we use the notation $\mathcal{B}_\mathcal{G}$ for the collection of {\it Borel sets} on $\mathcal{G}$, we also denote by $\mathrm{C}(\mathcal{G})$ the set of {\it complex continuous functions} on $\mathcal{G}$ equipped with the uniform norm $\|\cdot\|_\infty$, we use the notation $\mathcal{M}(\mathcal{G})$ for the set of {\it finite Borel measures} on $\mathcal{G}$ and we denote by $\mathcal{M}_1(\mathcal{G})$ the set of {\it Borel probability measures} on $\mathcal{G}$.

\begin{remark}
Note that any topological group $G$ is a topological groupoid where the space of units is $G^{(0)} = \{{\bf 1}\}$ (with ${\bf 1}$ the unit of the topological group). Furthermore, in that case, $G$ has an only isotropy group and it is the trivial group. By the above, all the theories presented throughout this section can be applied to the framework of topological groups, in particular, the so-called isotropy groups as we will see in the following section.
\end{remark}

\subsection{Finite groupoids}
\label{finite-groupoids-section}

We say that a topological groupoid $\mathcal{G}$ is a {\it finite groupoid} when the set of morphisms is finite. Actually, in this case, the spaces $\mathcal{G}^{(0)}$, $\mathcal{G}$ and $\mathcal{G}^{(2)}$ are all finite and each one of them is equipped with the discrete topology. In particular, the above implies that any set in $\mathcal{B}_\mathcal{G}$ is a finite union of morphisms belonging to $\mathcal{G}$. Throughout this section, we deal only with finite groupoids.

Next, we will present an interesting example of a finite groupoid (beyond the group theory approach), well known in the mathematical physics literature as ``quantum ratchet". That model describes a physical system of {\it selective measures} acting on microscopic phenomena (see for more details see  \cite{Ibort2} and \cite{Ibort1}). In addition, we also describe in explicit form the main concepts usually considered in Noncommutative integration (see \cite{zbMATH03840637} and \cite{MR548112}), and we detail appropriate definitions and examples in that matter.

\begin{example}[Quantum ratchet]
\label{quantum-ratchet-definition}
Consider the set $\{-, +\}$ and the {\it permutation map} $\sigma : \{1,2,3\} \to \{1,2,3\}$, given by, $\sigma(1) = 2$, $\sigma(2) = 3$ and $\sigma(3) = 1$. {\it The quantum ratchet} groupoid $\mathcal{G}$ is given by the expression
\begin{equation} 
\label{pois} 
\mathcal{G} := \{(a, \sigma_0,  b), (a, \sigma_1, b), (a, \sigma_2, b ) :\; a, b \in \{-, +\} \} \;,
\end{equation}
where $\sigma_j := \sigma^j$ for $j = 0, 1, 2$. Following 
\cite{Ibort1} and \cite{Ibort2}, we denote the elements of $\mathcal{G}$ by
\begin{center}
\begin{tabular}{ c c c }
$+ := (+, \sigma_0, +)$, & $\sigma_+ := (+, \sigma_1, +)$, & $\sigma^2_+ := (+, \sigma_2, +)$, \\
$- := (-, \sigma_0, -)$, & $\sigma_- := (-, \sigma_1, -)$, & $\sigma^2_- := (-, \sigma_2, -)$, \\
$\beta_1 := (-, \sigma_1, +)$, & $\beta_2 := (-, \sigma_2, +)$, & $\beta_3 := (-, \sigma_0, +)$, \\
$\alpha_1 := (+, \sigma_1, -)$, & $\alpha_2 := (+, \sigma_2, -)$, & $\alpha_3 := (+, \sigma_0, -)$.
\end{tabular}
\end{center}

As at the beginning of the section, we denote by $\alpha$ the general element in the groupoid $\mathcal{G}$. We say that two morphisms $\beta = (a_1, \sigma_i, b_ 1)$ and $\alpha = (a_2, \sigma_j, b_ 2)$, with $i, j = 0, 1, 2$, are composable in the form $\alpha \cdot \beta$ if, and only if, $a_1 = b_2$. Furthermore, in this case, the composition law $\cdot$ is given by
\begin{equation}
\label{composition-law}
\alpha \cdot \beta 
= ( a_2 , \sigma_j, a_1 ) \cdot ( a_1 , \sigma_i, b_1 ) := ( a_2 , \sigma_j \sigma_i, b_1 ) \;,
\end{equation}
where $\sigma_j \sigma_i$ is the usual product of permutations on the set $\{1, 2, 3\}$.

On the other hand, since each one of the maps $\sigma_j$, with $j = 0, 1, 2$, is a bijection on $\{1,2,3\}$, by \eqref{composition-law}, it is guaranteed that each element of $\mathcal{G}$ has an inverse. The list of all the inverse morphisms is given by
\begin{center}
\begin{tabular}{ c c c c }
$+^{-1} = +$, & $\beta_1^{-1} = \alpha_2$, & $\alpha_1^{-1} = \beta_2$, & $-^{-1} = -$, \\
$\sigma_+^{-1} = \sigma^2_+$, & $\beta_2^{-1} = \alpha_1$, & $\alpha_2^{-1} = \beta_1$, & $\sigma_-^{-1} = \sigma^2_-$, \\
$(\sigma^2_+)^{-1} = \sigma_+$, & $\beta_3^{-1} = \alpha_3$, & $\alpha_3^{-1} = \beta_3$, & $(\sigma^2_-)^{-1} = \sigma_-$.
\end{tabular}
\end{center}

By the above, given a triple $\alpha = (a, \sigma_j, b )$ belonging to $\mathcal{G}$, with $j = 0, 1, 2$ and $a, b \in \{-, +\}$, the source of $\alpha$ is equal to $s(\alpha) = s(a, \sigma_j, b) = (b, \sigma_0, b) := b$ and the range of $\alpha$ is given by $r(\alpha)= r(a, \sigma_j, b) := (a, \sigma_0, a) := a$. Therefore, the set $\mathcal{G}$, such as appears defined in \eqref{pois}, joined with the composition law in \eqref{composition-law} and the inverse map results in a finite groupoid with a set of objects $\mathcal{G}^{(0)} = \{-, +\}$. 
\end{example}

\begin{remark}
Note that the groupoid $\mathcal{G}$ presented in the last example has two units which imply that $\mathcal{G}$ cannot be isomorphic to any group. 
\end{remark}

Given $\alpha \in \mathcal{G}$, we define the {\it canonical potential} $\chi_\alpha : \mathcal{G} \to \mathbb{C}$ by the equation
\begin{equation*}
\chi_\alpha(\beta) :=
\begin{cases}
1 &, \alpha = \beta \\
0 &, \alpha \neq \beta \;.
\end{cases}
\end{equation*}

It is not difficult to check that $\{\chi_\alpha :\; \alpha \in \mathcal{G}\}$ results in a basis for the space $\mathrm{C}(\mathcal{G})$. That is, any $f \in \mathrm{C}(\mathcal{G})$ can be expressed as 
\begin{equation}
\label{lcp}
f = \sum_{\alpha \in \mathcal{G}} z_\alpha \chi_\alpha \;,
\end{equation} 
with $z_\alpha \in \mathbb{C}$. Actually, by the expression in \eqref{lcp}, it is easy to realize that $z_\alpha := f(\alpha)$ for each $\alpha \in \mathcal{G}$. 

Furthermore, since $\mathcal{G}$ is a groupoid, we have that each $\alpha \in \mathcal{G}$ has an inverse $\alpha^{-1} \in \mathcal{G}$. Then, we are able to define an {\it adjoint operator} $\mathrm{Ad} : \mathrm{C}(\mathcal{G}) \to \mathrm{C}(\mathcal{G})$ assigning to each $f \in \mathrm{C}(\mathcal{G})$ the function $\mathrm{Ad}(f) = f^* \in \mathrm{C}(\mathcal{G})$ given by the equation
\begin{equation}
\label{adjoint-op}
f^* := \sum_{\alpha \in \mathcal{G}} \overline{f(\alpha)}\chi_{\alpha^{-1}} \;.
\end{equation}

By \eqref{adjoint-op}, it follows that $f^*(\alpha) = \overline{f(\alpha^{-1})}$ for each $\alpha \in \mathcal{G}$ and it is easy to check that $\mathrm{Ad}$ is an involution, i.e., $(f^*)^* = f$. Furthermore, below we will show that the operator $\mathrm{Ad}$ results in an involution for a suitable Banach algebra defined on the space $\mathrm{C}(\mathcal{G})$.

\begin{example}[Involution]
Consider the so-called ``quantum ratchet" (see Example \ref{quantum-ratchet-definition} for details). So, the adjoint operator can be expressed explicitly by
\begin{align*}
f^*
=& \overline{f(+)}\chi_+ + \overline{f(\sigma_+)}\chi_{\sigma_+^2} + \overline{f(\sigma_+^2)}\chi_{\sigma_+} + \overline{f(-)}\chi_- + \overline{f(\sigma_-)}\chi_{\sigma_-^2} + \overline{f(\sigma_-^2)}\chi_{\sigma_-} \nonumber \\
&+ \overline{f(\alpha_1)}\chi_{\beta_2} + \overline{f(\alpha_2)}\chi_{\beta_1} + \overline{f(\alpha_3)}\chi_{\beta_3} + \overline{f(\beta_1)}\chi_{\alpha_2} + \overline{f(\beta_2)}\chi_{\alpha_1} + \overline{f(\beta_3)}\chi_{\alpha_3} \nonumber \;.
\end{align*}

For instance, evaluating in $\alpha_1$, we obtain that $f^*(\alpha_1) = \overline{f(\beta_2)}$.
\end{example}

Now we introduce a {\it convolution operation} $*$ on the space $\mathrm{C}(\mathcal{G})$ in the following way: given two potentials $f, g \in \mathrm{C}(\mathcal{G})$, we set the convolution between $f$ and $g$ as
\[
f * g
:= \sum_{s(\alpha) = r(\beta)} f(\alpha)g(\beta)\chi_{\alpha \cdot \beta} \;.
\]

In fact, it is not difficult to check that the above expression implies that each $\gamma \in \mathcal{G}$ satisfies
\[
f * g(\gamma) = \sum_{\alpha \cdot \beta = \gamma} f(\alpha)g(\beta) = \sum_{r(\alpha) = r(\gamma)} f(\alpha)g(\alpha^{-1} \cdot \gamma) \;.
\]

Furthermore, it is not difficult to check that the convolution defined above satisfies $(f * g)^* = g^* * f^*$, where $^*$ is the involution in \eqref{adjoint-op}. Hereafter, we will use the notation $\sigma_+^0 := +$ and $\sigma_-^0 := -$ in order to simplify the expressions appearing in the examples presented below.

\begin{example}[Convolution]
The explicit expression for the convolution in the case of the ``quantum ratchet" (introduced in Example \ref{quantum-ratchet-definition}), is given by
\begin{align}
f * g
=& \Bigl( \sum_{j = 1}^3 f(\sigma^{j - 1}_+) \Bigr)\Bigl( \sum_{i = 1}^3 g(\beta_i)\chi_{\sigma^{j - 1}_+ \cdot \beta_i} + g(\sigma^{i - 1}_+)\chi_{\sigma^{j - 1}_+ \cdot \sigma^{i - 1}_+}\Bigr) \nonumber \\
&+ \Bigl( \sum_{j = 1}^3 f(\alpha_j) \Bigr)\Bigl( \sum_{i = 1}^3 g(\beta_i)\chi_{\alpha_j \cdot \beta_i} + g(\sigma^{i - 1}_+)\chi_{\alpha_j \cdot \sigma^{i - 1}_+}\Bigr) \nonumber \\
&+ \Bigl( \sum_{j = 1}^3 f(\sigma^{j - 1}_-) \Bigr)\Bigl( \sum_{i = 1}^3 g(\alpha_i)\chi_{\sigma^{j - 1}_- \cdot \alpha_i} + g(\sigma^{i - 1}_-)\chi_{\sigma^{j - 1}_- \cdot \sigma^{i - 1}_-}\Bigr) \nonumber \\
&+ \Bigl( \sum_{j = 1}^3 f(\beta_j) \Bigr)\Bigl( \sum_{i = 1}^3 g(\alpha_i)\chi_{\beta_j \cdot \alpha_i} + g(\sigma^{i - 1}_-)\chi_{\beta_j \cdot \sigma^{i - 1}_-}\Bigr) \label{rew} \;.
\end{align}

For instance, evaluating in the morphism $+ \in \mathcal{G}$, we obtain
\begin{align*}
f * g(+) 
=& f(\sigma_+)g(\sigma^2_+) + f(\sigma^2_+)g(\sigma_+) + f(+)g(+) \\
&+ f(\alpha_1)g(\beta_2) + f(\alpha_2)g(\beta_1) + f(\alpha_3)g(\beta_3) \;. 
\end{align*}
\end{example}

Observe that the space $\mathrm{C}(\mathcal{G})$ joint with the convolution $*$ becomes an algebra. Moreover, when $\mathrm{C}(\mathcal{G})$ is equipped with the uniform norm $\| \cdot \|_\infty$ and the involution in \eqref{adjoint-op}, we obtain a $C^*$-algebra which we denote by $\mathrm{C}^*(\mathcal{G})$. 

The same conclusion holds true on the {\it units space} $\mathcal{G}^{(0)}$, i.e., the space of functions $\mathrm{C}(\mathcal{G}^{(0)})$ equipped with the convolution $*$ and the involution in \eqref{adjoint-op} also results in a $C^*$-algebra which will be denoted by $\mathrm{C}^*(\mathcal{G}^{(0)})$. 

We say that a linear operator $\rho : \mathrm{C}^*(\mathcal{G}) \to \mathbb{C}$ is a {\it density operator} when $\rho(1) = 1$. Actually, in order to give an explicit expression, it is enough to identify the values $\rho(\chi_\alpha)$, for each $\alpha \in \mathcal{G}$, and guarantee that 
\begin{equation*}
\rho(1) = \sum_{\alpha \in \mathcal{G}} \rho(\chi_\alpha) = 1 \;.
\end{equation*}

The above, because by linearity of the space $\mathrm{C}(\mathcal{G})$, we have
\begin{equation}
\label{density-operator}
\rho(f) = \sum_{\alpha \in \mathcal{G}} f(\alpha) \rho(\chi_\alpha) \;.
\end{equation}

Density operators play an important role in the setting of Quantum Mechanics. The above is because they have the same behavior that the probability measures in the classical approach.

\begin{example}[Density operator]
In the case of the ``quantum ratchet" (which was introduced in Example \ref{quantum-ratchet-definition}), we have that a density operator defined on $\mathrm{C}^*(\mathcal{G})$ is given by
\begin{equation}
\label{dez}
\rho(f) := \sum_{i = 1}^3 f(\alpha_i)\rho(\chi_{\alpha_i}) + f(\beta_i)\rho(\chi_{\beta_i}) + f(\sigma^{i - 1}_+)\rho(\chi_{\sigma^{i - 1}_+}) + f(\sigma^{i - 1}_-)\rho(\chi_{\sigma^{i - 1}_-}) \;, 
\end{equation}
with 
\begin{equation*}
\sum_{i = 1}^3 \rho(\chi_{\alpha_i}) + \rho(\chi_{\beta_i}) + \rho(\chi_{\sigma^{i - 1}_+}) + \rho(\chi_{\sigma^{i - 1}_-}) = 1 \;. 
\end{equation*}
\end{example}

We say that a linear map $\lambda : \mathrm{C}^*(\mathcal{G}) \to \mathrm{C}^*(\mathcal{G}^{(0)})$ is fibered by the maps $r$ and $\mathrm{Id}_{\mathcal{G}^{(0)}}$ when for any $f \in \mathrm{C}(\mathcal{G})$ and each $a \in \mathcal{G}^{(0)}$ such that $f|_{\mathcal{G}_a} \equiv 0$, we have $\lambda(f)(a) = 0$.

\begin{definition} \label{GK}
The  {\it $\mathcal{G}$-kernels} are defined as linear maps $\lambda : \mathrm{C}^*(\mathcal{G}) \to \mathrm{C}^*(\mathcal{G}^{(0)})$ fibered by $r$ and $\mathrm{Id}_{\mathcal{G}^{(0)}}$ satisfying the expression
\begin{equation}
\label{G-kernel}
\lambda^a(f) = \lambda(f)(a) := \sum_{\alpha \in \mathcal{G}_a} t_\alpha^a f(\alpha) \;.
\end{equation}
\end{definition}

Since, any $E \in \mathcal{B}_\mathcal{G}$ is finite union of elements belonging to $\mathcal{G}$, denoting the {\it Dirac measure} by 
\begin{equation*}
\delta_\alpha(E) :=
\begin{cases}
1 &, \alpha \in E \\
0 &, \alpha \notin E \;.
\end{cases}
\end{equation*}
it follows that for each $a \in \mathcal{G}^{(0)}$
\begin{equation*}
\lambda^a = \lambda( \cdot )(a) = \sum_{\alpha \in \mathcal{G}_a} t_\alpha^a \delta_\alpha \;,
\end{equation*}

That is, the map $a \mapsto \lambda^a$ takes values into the set $\mathcal{M}(\mathcal{G})$. Moreover, by the fibered property, we have $t_\alpha^a = 0$ when $\alpha \notin \mathcal{G}_a$ which also implies that the Borel measure $\lambda^a$ is supported on $\mathcal{G}_a \subset \mathcal{G}$. Besides that, defining
\begin{equation*}
\lambda(E)( \cdot ) := \lambda(\chi_E)( \cdot ) = \sum_{\alpha \in \mathcal{G}_\cdot} t_\alpha^{( \cdot )} \chi_E(\alpha) \;,
\end{equation*}
for each $E \subset \mathcal{B}_\mathcal{G}$, it follows that the map $E \mapsto \lambda(E)( \cdot )$ is measurable and takes values into the set $\mathrm{C}(\mathcal{G})$.

\begin{example}[$\mathcal{G}$-kernel]
\label{G-kernel-quantum-ratchet}
In the particular case of the ``quantum ratchet" (see Example \ref{quantum-ratchet-definition} for details), the explicit expression of the so-called $\mathcal{G}$-kernel is given by
\begin{equation*}
\lambda(f)(+) = \sum_{i = 1}^3 t_{\alpha_i}^+ f(\alpha_i) + t_{\sigma^{i - 1}_+}^+ f(\sigma^{i - 1}_+) \;,
\end{equation*} 
and
\begin{equation*}
\lambda(f)(-) = \sum_{i = 1}^3 t_{\beta_i}^- f(\beta_i) + t_{\sigma^{i - 1}_-}^- f(\sigma^{i - 1}_-) \;.
\end{equation*}
\end{example}

Next, we will introduce the definitions of transverse function, modular function, and quasi-invariant measure (see Definition \ref{quasi}) in our setting. These concepts  are the building block for the framework of noncommutative integration (see \cite{zbMATH03840637} and \cite{MR548112}).  In \cite{MR4378494} it is described the relation of Haar measures with DLR probabilities of Statistical Mechanics (which can be seen in the framework of Thermodynamic Formalism as eigenprobabilities for the dual of the Ruelle operator). On the other hand, the study of groupoids with dynamical content appears in  \cite{zbMATH07352601}, \cite{MR3993188}, \cite{MR2536186}, \cite{MR2215776}.

\begin{definition} \label{tran}
A {\it transverse function} is a $\mathcal{G}$-kernel satisfying the equation $\lambda^{r(\alpha)} = \alpha \lambda^{s(\alpha)}$ for each $\alpha \in \mathcal{G}$. Then, by the expression \eqref{G-kernel}, it follows that 
\begin{equation*}
\sum_{\gamma \in \mathcal{G}_{r(\alpha)}} t_\gamma^{r(\alpha)} \delta_\gamma = \sum_{\gamma \in \mathcal{G}_{s(\alpha)}} t_\gamma^{s(\alpha)} \delta_{\alpha \cdot \gamma} \;,
\end{equation*}
for each $\alpha \in \mathcal{G}$. The above is equivalent to say
\begin{equation*}
\sum_{\gamma \in \mathcal{G}} t_\gamma^{r(\alpha)} f(\gamma) = \sum_{\gamma \in \mathcal{G}} t_\gamma^{s(\alpha)} f(\alpha \cdot \gamma) \;,
\end{equation*}
for each $\alpha \in \mathcal{G}$ and any $f \in \mathrm{C}^*(\mathcal{G})$. In particular, when we have $\mathcal{G}_a \neq \emptyset$ for each $a \in \mathcal{G}^{(0)}$, we say that $\lambda$ is a {\it faithful transverse function}. 
\end{definition}


\begin{definition} \label{mod}

We say that a map $\Delta : \mathcal{G}^{(2)} \to \mathbb{R}^+$ is a {\it modular form} when each $(\alpha^{-1}, \beta) \in \mathcal{G}^{(2)}$ satisfies
\begin{equation*}
\Delta(\alpha^{-1} \cdot \beta) = \Delta(\alpha)^{-1}\Delta(\beta) \;.
\end{equation*}

\end{definition}

More general definitions will be presented later.

Using the above, we can present a suitable definition of quasi-invariant measure with respect to a modular function $\Delta : \mathcal{G} \to \mathbb{C}$ in the matter of finite groupoids. Indeed, consider a faithful transverse function $\lambda$ and $M \in \mathcal{M}(\mathcal{G}^{(0)})$. It is not difficult to check that the measure $M$ is of the form
\begin{equation*}
M := \sum_{a \in \mathcal{G}^{(0)}} t_a \delta_a \;,
\end{equation*}
where the Dirac measure at the point $a \in \mathcal{G}^{(0)}$ is given by
\begin{equation*}
\delta_a(E) :=
\begin{cases}
1 &, a \in E \\
0 &, a \notin E \;,
\end{cases}
\end{equation*} 
for each $E \in \mathcal{B}_{\mathcal{G}^{(0)}}$. We say that the measure $M$ is a {\it quasi-invariant measure} (general definition in Definition  \ref{quasi}) with respect to the modular function $\Delta$, if we have that each $f \in \mathrm{C}^*(\mathcal{G})$ satisfies the expression
\begin{equation}
\label{quasi-inv}
\sum_{a \in \mathcal{G}^{(0)}}\sum_{\alpha \in \mathcal{G}_a} t_a t_\alpha^a f(\alpha) = \sum_{a \in \mathcal{G}^{(0)}}\sum_{\alpha \in \mathcal{G}_a} t_a t_{\alpha^{-1}}^a \overline{f(\alpha^{-1})} \Delta(\alpha^{-1}) \;.
\end{equation}

In particular, when the modular function is given by $\Delta(\alpha) = e^{\varphi(s(\alpha)) - \varphi(r(\alpha))}$, where $\varphi : \mathcal{G}^{(0)} \to \mathbb{C}$ is a measurable function, we say that $M$ is a {\it Haar-invariant measure} and the expression in \eqref{quasi-inv} is equivalent to
\begin{equation*}
\sum_{a \in \mathcal{G}^{(0)}}\sum_{\alpha \in \mathcal{G}_a} t_a t_\alpha^a f(\alpha) = \sum_{a \in \mathcal{G}^{(0)}}\sum_{\alpha \in \mathcal{G}_a} t_a t_{\alpha^{-1}}^a \overline{f(\alpha^{-1})} \frac{e^{\varphi(a)}}{e^{\varphi(s(\alpha))}} \;.
\end{equation*}

Moreover, note that in this case the measure $m := M \cdot \lambda$ satisfies the expression
\begin{equation*}
m = \sum_{a \in \mathcal{G}^{(0)}}\sum_{\alpha \in \mathcal{G}_a} t_a t_\alpha^a \delta_a \delta_\alpha \;,
\end{equation*}
and the property $m(f) = m(f^*\Delta^{-1})$.

\begin{example}
Now we present the explicit expression of the so-called Haar-invariant measures in the setting of the ``quantum ratchet" (introduced in the Example \ref{quantum-ratchet-definition}). In this case, we have a measure of the form
\begin{equation*}
M = t_- \chi_- + t_+ \chi_+ \;,
\end{equation*}
is a Haar-invariant measure with respect to the $\mathcal{G}$-kernel in Example \ref{G-kernel-quantum-ratchet} and the modular function $\Delta(\alpha) = e^{\varphi(s(\alpha))- \varphi(r(\alpha))}$, when satisfies the following 
\begin{align*}
&\sum_{i = 1}^3 t_+ (t_{\alpha_i}^+ f(\alpha_i) + t_{\sigma^{i - 1}_+}^+ f(\sigma^{i - 1}_+)) + t_-(t_{\beta_i}^- f(\beta_i) + t_{\sigma^{i - 1}_-}^- f(\sigma^{i - 1}_-)) \\
&= \sum_{i = 1}^3 t_+ \Bigl(t_{\beta_i}^+ \overline{f(\beta_i)}\frac{e^{\varphi(-)}}{e^{\varphi(+)}} + t_{\sigma^{i - 1}_+}^+ \overline{f(\sigma^{i - 1}_+)}\Bigr) + t_- \Bigl(t_{\alpha_i}^- \overline{f(\alpha_i)}\frac{e^{\varphi(+)}}{e^{\varphi(-)}} + t_{\sigma^{i - 1}_-}^- \overline{f(\sigma^{i - 1}_-)}\Bigr) \;,
\end{align*}
for each $f \in \mathrm{C}^*(\mathcal{G})$.
\end{example}

\subsection{Compact groupoids}
\label{non-finite-groupoids-section}

In this section, we present examples of compact non-finite topological groupoids. The main idea is to show some illustrative cases where the results to be presented in section \ref{tannaka-lemma-section} are applicable. Besides that, we present explicit expressions of some observables defined in section \ref{finite-groupoids-section} for the setting introduced in this section.

A more detailed analysis of properties related to  the next example as some examples of $C^*$-algebras in that context appear in \cite{MR2536186}, \cite{Ren1}, \cite{Put} and \cite{MR2215776}.

\begin{example}[Dynamic groupoid]
Consider a compact metric space $X$ and a homeomorphism $f : X \to X$. We assume that the group $G := \mathbb{Z}$ acts on the set $X$ taking any pair $(x, n) \in X \times \mathbb{Z}$ and sending it into the value $x \cdot n = f^n(x) \in X$. In this case, we can define the following groupoid
\[
\mathcal{G} := \{(x, n) :\; x \in X, \; n \in \mathbb{Z}\} \cong X \times \mathbb{Z} \;,
\]
where the set of objects of $\mathcal{G}$ is given by 
\[
\mathcal{G}^{(0)} := \{(x, 0) :\; x \in X\} = X \times \{0\} \cong X \;,
\] 
and the set of composable elements of $\mathcal{G}$ is given by 
\[
\mathcal{G}^{(2)} := \{((x, n), (x', m)) :\; f^n(x) = x'\} \;.
\]

In this case, we have that the product on $\mathcal{G}$ is defined as
\[
(x, n) \cdot (x', m) := (x, n+m) \;, 
\]
and any element $(x, n) \in \mathcal{G}$ has inverse $(x, n)^{-1} := (f^n(x), -n)$. Indeed,
\[
(x, n) \cdot (x, n)^{-1} := (x, n) \cdot (f^n(x), -n) = (x, 0) \in \mathcal{G}^{(0)} \;,
\]
and 
\[
(x, n)^{-1} \cdot (x, n) := (f^n(x), -n) \cdot (x, n) = (f^n(x), 0) \in \mathcal{G}^{(0)} \;,
\]

It is not difficult to check that both of the maps $^{-1} : \mathcal{G} \to \mathcal{G}$ and $\cdot : \mathcal{G}^{(2)} \to \mathcal{G}$ are continuous. The above, simply as a consequence of the continuity of the maps $\alpha \mapsto f(\alpha)$, $n \mapsto -n$ and $(n, m) \mapsto n+m$.

Furthermore, in this case, the source of the map $s$ and range $r$ are given by
$$
s(x, n) := (x, n)^{-1} \cdot (x, n) = (f^n(x), 0) \in \mathcal{G}^{(0)} \;,
$$ 
and 
$$
r(x, n) := (x, n) \cdot (x, n)^{-1} = (x, 0) \in \mathcal{G}^{(0)} \;,
$$
which guarantees continuity because the composition of continuous maps also results in a continuous map.

On the other hand, the so-called modular forms are  of the form 
\[
\Delta(x, n+m) = \Delta(x, n)\Delta(f^n(x), n+m) \;.
\]

That is, any modular form defined in this setting results in a cocycle.
\end{example}

Observe that in the last example, we present a topological groupoid with a non-numerable set of objects and a non-numerable set of morphisms which is Hausdorff compact. In fact, that is the setting in which we are interested in this section. So, the definitions of the observables that appear previously need some modifications in order to still hold in this new setting.

The definition of the adjoint operator is given also by the map $\mathrm{Ad} : \mathrm{C}(\mathcal{G}) \to \mathrm{C}(\mathcal{G})$ such that $\mathrm{Ad}(f) = f^*$ satisfies the expression $f^*(\alpha) = \overline{f(\alpha^{-1})}$ for each $\alpha \in \mathcal{G}$. So the map $f \mapsto f^*$ results in an involution. The definitions of convolution, density operator, $\mathcal{G}$-kernel, transverse function, modular form, and quasi-invariant measure considered here are given in a similar way to the ones presented in \cite{zbMATH03840637}.

Indeed, remember that a linear map $\lambda : \mathrm{C}(\mathcal{G}) \to \mathrm{C}(\mathcal{G}^{(0)})$ is fibered by the maps $r$ and $\mathrm{Id}_{\mathcal{G}^{(0)}}$ when for any $f \in \mathrm{C}(\mathcal{G})$ and each $a \in \mathcal{G}^{(0)}$ such that $f|_{\mathcal{G}_a} \equiv 0$, we have $\lambda(f)(a) = 0$.

\begin{definition} \label{Ge} 
Given a collection $\{\lambda^a : a \in \mathcal{G}^{(0)}\}$, with $\lambda^a \in \mathcal{M}(\mathcal{G})$, define 
\begin{equation}
\label{G-kernel-con}
\lambda(f)(a) := \int_{\alpha \in \mathcal{G}} f(\alpha) d\lambda^a(\alpha) \;.
\end{equation}

Then, we say that $\lambda$ is a {\it $\mathcal{G}$-kernel} when it satisfies the following conditions
\begin{enumerate}[i)]
\item $\mathrm{Supp}(\lambda^a) = \mathcal{G}_a$ for each $a \in \mathcal{G}^{(0)}$;
\item For each $f \in \mathrm{C}(\mathcal{G})$, the map $a  \mapsto \lambda(f)(a)$ is continuous; 
\item We have $\int_{\mathcal{G}} f(\gamma) d\lambda^{r(\alpha)}(\gamma) = \int_{\mathcal{G}} f(\alpha \cdot \gamma) d\lambda^{s(\alpha)}(\gamma)$ for any $\alpha \in \mathcal{G}$ and each $f \in \mathrm{C}(\mathcal{G})$. 
\end{enumerate}

\end{definition}

It is not difficult to check that the former conditions imply that $\lambda$ is fibered by $r$ and $\mathrm{Id}_{\mathcal{G}^{(0)}}$. 

\begin{definition} \label{Ha}  When all of the conditions presented above are satisfied, we say that the collection of measures $\{\lambda^a : a \in \mathcal{G}^{(0)}\}$ is a {\it left Haar system} (see for instance \cite{MR2536186}).
\end{definition} 

Now we are able to induce a structure of $C^*$-algebra on the space of functions $\mathrm{C}(\mathcal{G})$, even when the groupoid is not a finite one. So, consider a left Haar system $\{\lambda^a : a \in \mathcal{G}^{(0)}\}$, we define the convolution $*_{\lambda}$ between two maps $f, g$ belonging to $\mathrm{C}(\mathcal{G})$ by the expression
\begin{equation*}
f *_\lambda g := \int_{\mathcal{G}} f(\alpha \cdot \gamma) g(\gamma^{-1}) d\lambda^{s(\alpha)}(\gamma) \;.
\end{equation*}

It follows immediately that the space $\mathrm{C}(\mathcal{G})$ equipped with the convolution $*_\lambda$ is an algebra. Moreover, when it is equipped with the involution $f \mapsto f^*$ and the norm $\| \cdot \|_\infty$, it results in a $C^*$-algebra which we denote again by $\mathrm{C}^*(\mathcal{G})$.

In this case   a $\mathcal{G}$-kernel is a {\it transverse function} when it satisfies the expression $\lambda^{r(\alpha)} = \alpha \lambda^{s(\alpha)}$ for each $\alpha \in \mathcal{G}$. That is, 
\begin{equation*}
\int_{\mathcal{G}}f(\gamma) d\lambda^{r(\alpha)}(\gamma) = \int_{\mathcal{G}} f(\alpha \cdot \gamma) d\lambda^{s(\alpha)}(\gamma) \;.
\end{equation*}

In particular (see \cite{zbMATH03840637} for details), the former condition is equivalent to say that for any $f, g, h \in \mathrm{C}(\mathcal{G})$ the following expression holds 
\begin{equation*}
(f *_\lambda g) *_\lambda h = f *_\lambda (g *_\lambda h) \;.
\end{equation*}

We say that $\lambda$ is a {\it faithful transverse function}, when $\mathcal{G}_a \neq \emptyset$ for each $a \in \mathcal{G}^{(0)}$.

Next, we want to present a suitable definition of quasi-invariant measure in this context. 

\begin{definition} \label{quasi} Given a {\it modular function} $\Delta : \mathcal{G}^{(2)} \to \mathbb{R}^+$, i.e., satisfying $\Delta(\alpha^{-1} \cdot \beta) = \Delta(\alpha)^{-1} \Delta(\beta)$, we say that $M \in \mathcal{M}(\mathcal{G}^{(0)})$ is a {\it quasi-invariant measure} with respect to $\Delta$, when each $f \in \mathrm{C}^*(\mathcal{G})$ satisfies
\begin{equation*}
\int_{\mathcal{G}^{(0)}} \Bigl( \int_{\mathcal{G}_a} f(\alpha) d\lambda^a(\alpha) \Bigr) dM(a) = \int_{\mathcal{G}^{(0)}} \Bigl( \int_{\mathcal{G}_a} \overline{f(\alpha^{-1})}\Delta(\alpha^{-1})d\lambda^a(\alpha) \Bigr) dM(a) \;.
\end{equation*}

That is, the measure $m := M \circ \lambda$ which  satisfies $m(f) = m(f^* \Delta^{-1})$.

\end{definition}

\section{Tannaka's Theorem}
\label{tannaka-lemma-section}

In this section,  we prove a version of the so-called Tannaka's Theorem in the setting of Hausdorff compact topological groupoids. That is, we want to prove a duality between the groupoid $\mathcal{G}$ and a suitable set of homomorphisms of algebras. 

Throughout this section we use the notation $\mathrm{C}^{\geq 0}(\mathcal{G})$ for the cone of {\it non-negative continuous functions} on $\mathcal{G}$. A {\it linear representation} of the groupoid $\mathcal{G}$ is defined as a functor $\Lambda : \mathcal{G} \to \mathrm{\bf FinVect}$, where $\mathrm{\bf FinVect}$ is the {\it category of finite dimensional vector spaces}, sending each $x \in \mathcal{G}^{(0)}$ into a finite dimensional vector space $\Lambda(x) := V_x$ and any $\alpha \in \mathcal{G}$ into a linear map $\Lambda(\alpha) : V_{r(\alpha)} \to V_{s(\alpha)}$ such that $\Lambda(\alpha \cdot \beta) = \Lambda(\alpha)\Lambda(\beta)$ for any pair $(\alpha, \beta) \in \mathcal{G}^{(2)}$ and $\Lambda(x) = \mathrm{Id}_{V_x}$ for each $x \in \mathcal{G}^{(0)}$. 

Since each $\alpha \in \mathcal{G}$ is an isomorphism with inverse $\alpha^{-1}$, it follows that $\Lambda(\alpha)$ is a linear isomorphism with inverse $\Lambda(\alpha)^{-1} = \Lambda(\alpha^{-1})$. Moreover, we have
\begin{align*}
\mathrm{Id}_{V_{r(\alpha)}} &= \Lambda(r(\alpha)) = \Lambda(\alpha \cdot \alpha^{-1}) = \Lambda(\alpha) \Lambda(\alpha^{-1}) \;, \\
\mathrm{Id}_{V_{s(\alpha)}} &= \Lambda(s(\alpha)) = \Lambda(\alpha^{-1} \cdot \alpha) = \Lambda(\alpha^{-1}) \Lambda(\alpha) \;. 
\end{align*}

\begin{remark}
It is well known that the spaces $V_x$ associated with a linear representation $\Lambda$ of $\mathcal{G}$ result isomorphic on the connected components of $\mathcal{G}$. In particular, when $\mathcal{G}$ is connected (i.e., when for any $x, y \in \mathcal{G}^{(0)}$ there is $\alpha \in \mathcal{G}$ such that $s(\alpha) = y$ and $r(\alpha) = x$), it follows that the linear spaces $V_x$ are all isomorphic.
\end{remark}

Given $\Lambda$ a linear representation of $\mathcal{G}$ and a connected component $\mathcal{G}_i$ of $\mathcal{G}$, we call {\it local dimension} of $\Lambda$ into the component $\mathcal{G}_i$, to the dimension of the spaces $\Lambda(x)$ (which is the same for each $x \in \mathcal{G}_i^{(0)}$). So, in the case of connected groupoids, the {\it dimension} of $\Lambda$ is defined as the dimension of $\Lambda(x)$, where $x$ is an arbitrary element of $\mathcal{G}^{(0)}$.

Our next goal is to introduce suitable definitions for direct sum and tensor product between linear representations of a topological groupoid $\mathcal{G}$.

So, the {\it direct sum} between linear representations is defined as a functor $\oplus : \mathrm{\bf FinVect} \times \mathrm{\bf FinVect} \to \mathrm{\bf FinVect}$, such that each $x \in \mathcal{G}^{(0)}$ satisfies
\begin{equation*}
(\Lambda \oplus \Lambda')(x) := \Lambda(x) \oplus \Lambda'(x)
\end{equation*}
and 
\begin{equation*}
(\Lambda \oplus \Lambda')(\alpha) := \Lambda(\alpha) \oplus \Lambda'(\alpha)
\end{equation*}
for each $\alpha \in \mathcal{G}$ (here the direct sum between linear transformations is given in the usual sense). Hereafter, we will use the notation $\Lambda \oplus \Lambda'$ for the direct sum of the linear representations $\Lambda$ and $\Lambda'$ which also results in a linear representation. 

On the other hand, the {\it tensor product} between linear representations is defined as a functor $\otimes : \mathrm{\bf FinVect} \times \mathrm{\bf FinVect} \to \mathrm{\bf FinVect}$ which satisfies 
\begin{equation*}
(\Lambda \otimes \Lambda')(x) = \Lambda(x) \otimes \Lambda'(x)
\end{equation*}
for each $x \in \mathcal{G}^{(0)}$ and 
\begin{equation*}
(\Lambda \otimes \Lambda')(\alpha) = \Lambda(\alpha) \otimes \Lambda'(\alpha)
\end{equation*}
for each $\alpha \in \mathcal{G}$ (where the tensor product between linear transformations is defined as usual). Note that the tensor product between the linear representations $\Lambda$ and $\Lambda'$ also results in a linear representation which we denote by $\Lambda \otimes \Lambda'$. 

Hereafter, we will use the notation $\mathrm{U}(V)$ for the set of {\it unitary transformations} from $V$ into itself, where $V$ is a {\it finite dimensional vector space} equipped with an {\it inner product} $( \cdot , \cdot )$.

An important characteristic that we want to identify among the linear representations of a groupoid $\mathcal{G}$ is when they are comparable. So, we say that $\varphi$ is a {\it morphism} between the linear representations $\Lambda$ and $\Lambda'$, when there is a family of linear transformations $(\varphi_x)_{x \in \mathcal{G}^{(0)}}$, with $\varphi_x : \Lambda(x) \to \Lambda'(x)$, such that
\begin{equation*}
\varphi_{s(\alpha)} \Lambda(\alpha) = \Lambda'(\alpha) \varphi_{r(\alpha)} \;.
\end{equation*}

Furthermore, we say that $\Lambda$ and $\Lambda'$ are {\it equivalent}, when  each one of the maps $\varphi_x $ is a linear isomorphism and, in that case, we use the notation $\Lambda \sim \Lambda'$. In particular, the spaces $\Lambda(x)$ and $\Lambda'(x)$ have the same dimension for each $x \in \mathcal{G}^{(0)}$ when $\Lambda \sim \Lambda'$.

The former definitions allow us to characterize the irreducibility of a linear representation. In fact, we say that the linear representation $\Lambda$ of $\mathcal{G}$ is {\it reducible}, when there are linear representations $\Lambda_1$ and $\Lambda_2$ of $\mathcal{G}$ such that
\begin{equation*}
\Lambda \sim \Lambda_1 \oplus \Lambda_2 \;,
\end{equation*}
in another case, we say the linear representation $\Lambda$ is {\it irreducible}.

Now we are able to present a definition of unitary representation in the setting of topological groupoids. Hereafter, we assume that $\mathcal{G}$ is a connected groupoid. So, given any linear representation $\Lambda$, we have $\Lambda(x) = V_\Lambda$ for each $x \in \mathcal{G}^{(0)}$. By the above, the vector space $V_\Lambda$ is uniquely determined by the functor $\Lambda$ and it is equipped with an {\it inner product} $( \cdot , \cdot )$. 

We say that $\Lambda$ is a {\it unitary representation} of the connected groupoid $\mathcal{G}$, when $\Lambda(\alpha) \in \mathrm{U}(V_\Lambda)$ for any $\alpha \in \mathcal{G}$. That is, any $\alpha \in \mathcal{G}$ satisfies $(\Lambda(\alpha)u, \Lambda(\alpha)v) = (u, v)$ for each pair $u, v \in V_\Lambda$. Furthermore, assuming that $n = \mathrm{dim}(V_\Lambda)$ and $\{e_1, ... , e_n\}$ is a orthonormal basis of $V_\Lambda$, it is possible to find a {\it matrix representation} of $\Lambda(\alpha)$ belonging to $\mathrm{U}(n, \mathbb{C})$ which we denote by $(\Lambda_{i j}(\alpha))_{n \times n}$. 

\begin{remark}
For instance, any unitary representation $\Lambda$ such that $\mathrm{dim}(V_\Lambda) = 1$ is a multiplication by a scalar $z$ such that $|z| = 1$. 
\end{remark}

A particular case of unitary representations is the so called {\it identity representations} denoted by $\Lambda_0$, which are defined as the functor sending each $\alpha \in \mathcal{G}$ into the map $\Lambda_0(\alpha) := \mathrm{Id}_{V_{\Lambda_0}}$. Note that, up to an isomorphism, there is a unique identity representation for each $n$-dimensional vector space.

Let $\mathcal{A}(\mathcal{G})$ be the set of {\it unitary representations} of the groupoid $\mathcal{G}$. Then, we have that $\Lambda \sim \Lambda'$ when there exists a unitary isomorphism $\varphi : V_\Lambda \to V_{\Lambda'}$, such that $\varphi \Lambda(\alpha) = \Lambda'(\alpha) \varphi$ for each $\alpha \in \mathcal{G}$. On the other hand, given any set $S \subset V_\Lambda$, we say that $S$ is stable by $\Lambda$ when $w \in S$ implies $\Lambda(\alpha)w \in S$ for any $\alpha \in \mathcal{G}$. In the following lemma, we show that stable vector subspaces characterize irreducible components of unitary representations.

\begin{lemma}
\label{reducible-lemma}
Consider $\Lambda \in \mathcal{A}(\mathcal{G})$ and suppose that there is a non-trivial subspace $W \subsetneq V_\Lambda$ stable by $\Lambda$. Then, $\Lambda$ is reducible.
\end{lemma}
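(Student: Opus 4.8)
The plan is to use the unitarity of $\Lambda$ to produce a complementary stable subspace, namely the orthogonal complement $W^{\perp}$ of $W$ in $V_\Lambda$ with respect to the inner product $(\cdot,\cdot)$, and then to exhibit $\Lambda$ as the direct sum of its restrictions to $W$ and to $W^{\perp}$. First I would upgrade the inclusion coming from stability to an equality: since $\Lambda$ is a representation of a \emph{groupoid}, every $\alpha \in \mathcal{G}$ has an inverse with $\Lambda(\alpha^{-1}) = \Lambda(\alpha)^{-1}$, so applying the hypothesis $\Lambda(\alpha)W \subseteq W$ to both $\alpha$ and $\alpha^{-1}$ yields $\Lambda(\alpha)W \subseteq W$ and $\Lambda(\alpha)^{-1}W \subseteq W$; as $V_\Lambda$ is finite dimensional and $\Lambda(\alpha)$ is invertible, this forces $\Lambda(\alpha)W = W$ for every $\alpha \in \mathcal{G}$.

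The key step is to check that $W^{\perp}$ is stable by $\Lambda$. Fix $\alpha \in \mathcal{G}$, $v \in W^{\perp}$ and $w \in W$. Using that $\Lambda(\alpha)$ is unitary, so that $\Lambda(\alpha)^{*} = \Lambda(\alpha)^{-1} = \Lambda(\alpha^{-1})$, I compute
\begin{equation*}
(\Lambda(\alpha)v, w) = (v, \Lambda(\alpha)^{*}w) = (v, \Lambda(\alpha^{-1})w) = 0 \;,
\end{equation*}
where the last equality holds because $\Lambda(\alpha^{-1})w \in W$ (by the equality $\Lambda(\alpha^{-1})W = W$ established above) and $v \perp W$. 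Since $w \in W$ was arbitrary, $\Lambda(\alpha)v \in W^{\perp}$, which proves that $W^{\perp}$ is stable by $\Lambda$.

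With both $W$ and $W^{\perp}$ stable, I would define $\Lambda_1$ and $\Lambda_2$ by restriction: for each $x \in \mathcal{G}^{(0)}$ set $\Lambda_1(x) := W$ and $\Lambda_2(x) := W^{\perp}$, and for each $\alpha \in \mathcal{G}$ set $\Lambda_1(\alpha) := \Lambda(\alpha)|_{W}$ and $\Lambda_2(\alpha) := \Lambda(\alpha)|_{W^{\perp}}$ (these are well defined as endomorphisms of $W$ and of $W^{\perp}$ by the two stability statements). Functoriality $\Lambda_i(\alpha \cdot \beta) = \Lambda_i(\alpha)\Lambda_i(\beta)$ and $\Lambda_i(x) = \mathrm{Id}$ are inherited from $\Lambda$ by restriction, so $\Lambda_1$ and $\Lambda_2$ are linear representations of $\mathcal{G}$. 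Finally, the orthogonal decomposition $V_\Lambda = W \oplus W^{\perp}$ supplies the family $\varphi_x : V_\Lambda \to W \oplus W^{\perp}$, $\varphi_x(v) := (P_W v, P_{W^{\perp}} v)$, of linear isomorphisms (with $P_W$ and $P_{W^{\perp}}$ the orthogonal projections); since $\Lambda(\alpha)$ preserves the decomposition, one has $\varphi_{s(\alpha)}\Lambda(\alpha) = (\Lambda_1 \oplus \Lambda_2)(\alpha)\,\varphi_{r(\alpha)}$ for every $\alpha \in \mathcal{G}$, so $\Lambda \sim \Lambda_1 \oplus \Lambda_2$. Since $W$ is nonzero and proper, both $W$ and $W^{\perp}$ are nonzero, so this is a genuine decomposition and $\Lambda$ is reducible.

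I expect the only real obstacle to be the stability of $W^{\perp}$, which is exactly where unitarity is indispensable: for a general (non-unitary) representation the orthogonal complement need not be stable, and one would instead have to manufacture an invariant inner product (for instance by integrating against a Haar system) before this argument applies. The passage from the group to the groupoid setting is painless here, being handled entirely by the identity $\Lambda(\alpha^{-1}) = \Lambda(\alpha)^{-1}$ together with the connectedness assumption that makes all the $V_x$ coincide with the single space $V_\Lambda$.
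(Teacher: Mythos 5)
Your proposal is correct and follows essentially the same route as the paper: use unitarity to show that $W^{\perp}$ is stable via $(\Lambda(\alpha)v,w)=(v,\Lambda(\alpha^{-1})w)=0$, and then decompose $\Lambda \sim \Lambda|_W \oplus \Lambda|_{W^{\perp}}$. You merely supply more detail (the equality $\Lambda(\alpha)W=W$ and the explicit isomorphisms $\varphi_x$) than the paper's terser argument; note that since the paper's definition of stability already quantifies over all $\alpha\in\mathcal{G}$, the inclusion $\Lambda(\alpha^{-1})W\subseteq W$ is immediate and the finite-dimensionality upgrade is not strictly needed.
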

\begin{proof}
Note that in this case $W^\bot$ is also stable by $\Lambda$. Indeed, taking any $v \in W^\bot$, i.e., a vector satisfying $(v, w) = 0$ for any $w \in W$, it follows that $(\Lambda(\alpha)v, w) = (v, \Lambda(\alpha^{-1})w) = 0$ for any $\alpha \in \mathcal{G}$. 

The above implies that $\Lambda(\alpha)v \in W^\bot$ for any $\alpha \in \mathcal{G}$. So, we have
\begin{equation*}
\Lambda \sim \Lambda|_W \oplus \Lambda|_{W^\bot} \;.
\end{equation*}
\end{proof}

\begin{remark}
The former lemma implies that any linear representation $\Lambda$, with $\mathrm{dim}(V_\Lambda) = n$, has a decomposition of the form $\Lambda = \Lambda_1 \oplus ... \oplus \Lambda_n$, where each $\Lambda_i$, with $i = 1, ... , n$ is irreducible.
\end{remark}

Given a groupoid $\mathcal{G}$ and a finite-dimensional vector space $V$, we denote by $\mathcal{A}(\mathcal{G}, V)$ the set of {\it unitary representations} of $\mathcal{G}$ on the space $V$. Fix $a \in \mathcal{G}^{(0)}$ and let $\mathrm{C}(a, \mathcal{G}^{(0)}, \mathrm{U}(V))$ be the set of {\it continuous functions} from $\mathcal{G}^{(0)}$ into the space $\mathrm{U}(V)$ sending $a$ into $\mathrm{Id}_V$. 

Since $\mathcal{G}$ is a connected topological groupoid, for each $a \in \mathcal{G}^{(0)}$ there exists a map $\Omega : \mathcal{G}^{(0)} \to \mathcal{G}$ such that each $x \in \mathcal{G}^{(0)}$ satisfies that $\Omega(x) \in \mathcal{G}_a^x$. For instance, for each $x \in \mathcal{G}^{(0)}$ we choose $\Omega(x) := \alpha^x$, for some $\alpha^x \in \mathcal{G}$ such that $s(\alpha^x) = x$ and $r(\alpha^x) = a$. However, in general, it is not possible to guarantee the continuity of the map $\Omega$. Because of that, we introduce the following hypothesis.

\begin{hypothesis}
Let $\mathcal{G}$ be a topological groupoid, we assume existence of $a \in \mathcal{G}^{(0)}$ and a continuous map $\Omega : \mathcal{G}^{(0)} \to \mathcal{G}$ such that each $x \in \mathcal{G}^{(0)}$ satisfies $\Omega(x) \in \mathcal{G}_a^x$. Furthermore, replacing $\Omega(x)$ by $\Omega(a)^{-1}\Omega(x)$, we can assume w.l.o.g. $\Omega(a) = {\bf 1}_a$.
\end{hypothesis}

\begin{lemma}
\label{lemma-H}
Let $\mathcal{G}$ be a connected compact groupoid satisfying the hypothesis {\bf (H)} for $a \in \mathcal{G}^{(0)}$ and $\Omega : \mathcal{G}^{(0)} \to \mathcal{G}$. Consider a finite dimensional vector space $V$ and $\Lambda \in \mathcal{A}(\mathcal{G}, V)$, we define $\lambda := \Lambda|_{\mathcal{G}_a^a}$ and $\mu(x) := \Lambda(\Omega(x))$ for each $x \in \mathcal{G}^{(0)}$. Then, the map sending each $\Lambda \in \mathcal{A}(\mathcal{G}, V)$ into $(\lambda, \mu) \in \mathcal{A}(\mathcal{G}_a^a, V) \times \mathrm{C}(a, \mathcal{G}^{(0)}, \mathrm{U}(V))$ results in a bijection.
\end{lemma}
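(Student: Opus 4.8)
The plan is to exhibit an explicit two-sided inverse to the assignment $\Lambda \mapsto (\lambda, \mu)$, the whole argument resting on a single structural observation: because $\mathcal{G}$ is connected and $\Omega$ provides a continuous choice of morphism $\Omega(x) \in \mathcal{G}_a^x$ for each object, every morphism can be routed through the base object $a$ and thereby reduced to an element of the isotropy group $\mathcal{G}_a^a$. Concretely, for $\alpha \in \mathcal{G}$ write $x := r(\alpha)$ and $y := s(\alpha)$; I would first check that
\[
g_\alpha := \Omega(x) \cdot \alpha \cdot \Omega(y)^{-1} \in \mathcal{G}_a^a \;,
\]
by tracking sources and ranges through the groupoid axioms, and that conversely $\alpha = \Omega(x)^{-1} \cdot g_\alpha \cdot \Omega(y)$, using $\Omega(x)^{-1} \cdot \Omega(x) = s(\Omega(x)) = {\bf 1}_x$ and $\alpha \cdot {\bf 1}_y = \alpha$. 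This is the decomposition that makes the reconstruction possible.

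Before that, I would dispatch the routine well-definedness of the forward map: since $\mathcal{G}_a^a$ is a group and $\Lambda$ is a unitary representation, $\lambda := \Lambda|_{\mathcal{G}_a^a}$ is a unitary representation of $\mathcal{G}_a^a$ on $V$; and $\mu(x) := \Lambda(\Omega(x))$ is unitary, depends continuously on $x$ because $\Omega$ and $\Lambda$ are continuous, and satisfies $\mu(a) = \Lambda(\Omega(a)) = \Lambda({\bf 1}_a) = \mathrm{Id}_V$ by the normalization $\Omega(a) = {\bf 1}_a$ from {\bf (H)}. Thus $(\lambda, \mu)$ indeed lands in $\mathcal{A}(\mathcal{G}_a^a, V) \times \mathrm{C}(a, \mathcal{G}^{(0)}, \mathrm{U}(V))$. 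Applying $\Lambda$ to the decomposition above and using $\Lambda(\alpha \cdot \beta) = \Lambda(\alpha)\Lambda(\beta)$ together with $\Lambda(\Omega(x)^{-1}) = \mu(x)^{-1}$ yields the reconstruction formula
\[
\Lambda(\alpha) = \mu(r(\alpha))^{-1}\, \lambda\big( \Omega(r(\alpha)) \cdot \alpha \cdot \Omega(s(\alpha))^{-1} \big)\, \mu(s(\alpha)) \;,
\]
which expresses $\Lambda$ entirely in terms of $(\lambda, \mu)$ and immediately gives injectivity.

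For surjectivity I would take an arbitrary pair $(\lambda, \mu)$ and \emph{define} $\Lambda$ by the displayed formula, then verify it is a unitary representation. Unitarity is immediate since $\Lambda(\alpha)$ is a product of unitaries. Continuity follows because $r$, $s$, the section $\Omega$, the composition and inversion maps are all continuous, $\lambda$ is continuous on the compact isotropy group, and $\mu$ is continuous. The functor identities are the heart of the check: $\Lambda({\bf 1}_x) = \mathrm{Id}_V$ reduces to $\lambda$ of the identity being the identity after the cancellation $\Omega(x) \cdot \Omega(x)^{-1} = {\bf 1}_a$, and for composable $\alpha, \beta$ with $y := s(\alpha) = r(\beta)$ the factors $\mu(y)$ and $\mu(y)^{-1}$ cancel in the middle, while $\Omega(y)^{-1} \cdot \Omega(y) = {\bf 1}_y$ lets the homomorphism property of $\lambda$ collapse $g_\alpha \cdot g_\beta$ into $g_{\alpha \cdot \beta}$, giving $\Lambda(\alpha)\Lambda(\beta) = \Lambda(\alpha \cdot \beta)$. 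Finally I would confirm the constructed $\Lambda$ restricts back to $(\lambda, \mu)$: evaluating the formula on $g \in \mathcal{G}_a^a$ returns $\lambda(g)$ (using $\Omega(a) = {\bf 1}_a$, $\mu(a) = \mathrm{Id}_V$), and on $\Omega(x)$ returns $\mu(x)$.

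I expect the main obstacle to be bookkeeping rather than conceptual: the delicate points are (i) making the source/range computation that places $g_\alpha$ in $\mathcal{G}_a^a$ airtight, since the groupoid conventions for composability must be applied in the correct order, and (ii) justifying continuity of the reconstructed $\Lambda$ uniformly, i.e.\ that $\alpha \mapsto g_\alpha$ is continuous into $\mathcal{G}_a^a$ so that composition with the continuous representation $\lambda$ is legitimate. Hypothesis {\bf (H)} is exactly what supplies the continuous $\Omega$ needed for this last step, and connectedness is what guarantees $V_x = V_\Lambda$ for all $x$ so that the formula produces operators on a single space $V$.
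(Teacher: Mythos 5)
Your proof follows the same route as the paper: your conjugation $g_\alpha = \Omega(r(\alpha)) \cdot \alpha \cdot \Omega(s(\alpha))^{-1}$ is exactly the map $\Gamma$ the paper introduces, and your reconstruction formula $\Lambda(\alpha) = \mu(r(\alpha))^{-1}\,\lambda(\Gamma(\alpha))\,\mu(s(\alpha))$ is identical to the one the paper derives. If anything you are more thorough: the paper stops at that formula and simply asserts that the assignment $\Lambda \mapsto (\lambda,\mu)$ is a bijection, whereas you explicitly carry out the surjectivity half (defining $\Lambda$ from an arbitrary pair $(\lambda,\mu)$ and verifying the unit, composition, unitarity and continuity requirements), which is precisely the step the published proof leaves implicit.
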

\begin{proof}
Given $\alpha \in \mathcal{G}$, define the map $\Gamma : \mathcal{G} \to \mathcal{G}_a^a$ by
\[
\Gamma(\alpha) := \Omega(r(\alpha)) \cdot \alpha \cdot \Omega(s(\alpha))^{-1} \;.
\]

By the above, it follows that $\Gamma$ is a continuous homomorphism from $\mathcal{G}$ into $\mathcal{G}_a^a$. Moreover, we are able to obtain an expression of $\alpha$ in terms of the maps $\Gamma$, $\Omega$, $s$ and $r$ in the following way
\[
\alpha = \Omega(r(\alpha))^{-1} \cdot \Gamma(\alpha) \cdot \Omega(s(\alpha)) \;.
\]

Then, we have
\begin{align*}
\Lambda(\alpha) 
&= \Lambda(\Omega(r(\alpha))^{-1} \cdot \Gamma(\alpha) \cdot \Omega(s(\alpha))) \\
&= \Lambda(\Omega(r(\alpha)))^{-1}\Lambda(\Gamma(\alpha))\Lambda(\Omega(s(\alpha))) 
= \mu(r(\alpha))^{-1} \lambda(\Gamma(\alpha)) \mu(s(\alpha)) \;.
\end{align*}
which guarantees that the map $\Lambda \mapsto (\lambda, \mu)$ is a continuous bijection. 
\end{proof}

In the following lemma, we prove that the bijection above is independent of the $a \in \mathcal{G}^{(0)}$ given by the hypothesis {\bf (H)}.

\begin{lemma}
\label{lemma-H2}
Let $\mathcal{G}$ be a connected compact groupoid satisfying the hypothesis {\bf (H)} for $a \in \mathcal{G}^{(0)}$ and $\Omega : \mathcal{G}^{(0)} \to \mathcal{G}$. Then, given $x \in \mathcal{G}^{(0)}$ and $\lambda' \in \mathcal{A}(\mathcal{G}_x^x)$, there is $\Lambda' \in \mathcal{A}(\mathcal{G})$ such that $V_{\lambda'} = V_{\Lambda'}$ and $\lambda' = \Lambda'|_{\mathcal{G}_x^x}$.
\end{lemma}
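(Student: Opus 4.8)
The plan is to show that hypothesis \textbf{(H)} is not tied to the distinguished object $a$, but holds equally at every $x \in \mathcal{G}^{(0)}$, and then to invoke Lemma \ref{lemma-H} directly at $x$. This reduces the statement to the bijection already established there, with no new representation-theoretic input.

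First I would build, from the given continuous section $\Omega$, a continuous section based at $x$. Define $\tilde{\Omega} : \mathcal{G}^{(0)} \to \mathcal{G}$ by $\tilde{\Omega}(y) := \Omega(x)^{-1} \cdot \Omega(y)$. Since $\Omega(y) \in \mathcal{G}_a^y$ and $\Omega(x)^{-1} \in \mathcal{G}_x^a$, one has $s(\Omega(x)^{-1}) = a = r(\Omega(y))$, so the product is composable and $\tilde{\Omega}(y) \in \mathcal{G}_x^y$; continuity of $\tilde{\Omega}$ follows from that of $\Omega$, of $^{-1}$ and of $\cdot$. Moreover $\tilde{\Omega}(x) = \Omega(x)^{-1} \cdot \Omega(x) = s(\Omega(x)) = {\bf 1}_x$. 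Hence the pair $(x, \tilde{\Omega})$ satisfies \textbf{(H)}, and Lemma \ref{lemma-H} applies verbatim with $(a, \Omega)$ replaced by $(x, \tilde{\Omega})$.

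Set $V := V_{\lambda'}$. By Lemma \ref{lemma-H} applied at $x$, the assignment $\Lambda \mapsto (\Lambda|_{\mathcal{G}_x^x},\; y \mapsto \Lambda(\tilde{\Omega}(y)))$ is a bijection from $\mathcal{A}(\mathcal{G}, V)$ onto $\mathcal{A}(\mathcal{G}_x^x, V) \times \mathrm{C}(x, \mathcal{G}^{(0)}, \mathrm{U}(V))$. I would then feed in the pair $(\lambda', \mu_0)$, where $\mu_0 \equiv \mathrm{Id}_V$ is the constant map; it lies in $\mathrm{C}(x, \mathcal{G}^{(0)}, \mathrm{U}(V))$ since it sends $x$ to $\mathrm{Id}_V$. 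Surjectivity of the bijection produces $\Lambda' \in \mathcal{A}(\mathcal{G}, V)$ whose image is $(\lambda', \mu_0)$. As the first coordinate of the bijection is precisely restriction to $\mathcal{G}_x^x$, this gives $\Lambda'|_{\mathcal{G}_x^x} = \lambda'$, while $V_{\Lambda'} = V = V_{\lambda'}$ by the choice of $V$, as required. If one prefers an explicit verification, the inverse formula from Lemma \ref{lemma-H} reads $\Lambda'(\alpha) = \mu_0(r(\alpha))^{-1} \lambda'(\tilde{\Gamma}(\alpha)) \mu_0(s(\alpha))$ with $\tilde{\Gamma}(\alpha) = \tilde{\Omega}(r(\alpha)) \cdot \alpha \cdot \tilde{\Omega}(s(\alpha))^{-1}$; for $\gamma \in \mathcal{G}_x^x$ one has $r(\gamma) = s(\gamma) = x$ and $\tilde{\Gamma}(\gamma) = {\bf 1}_x \cdot \gamma \cdot {\bf 1}_x = \gamma$, so $\Lambda'(\gamma) = \lambda'(\gamma)$.

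The only genuine obstacle is the first step: checking that $(x, \tilde{\Omega})$ satisfies \textbf{(H)}, namely that $\Omega(x)^{-1} \cdot \Omega(y)$ is composable, lands in $\mathcal{G}_x^y$, depends continuously on $y$, and is the unit at $y = x$. Once this is in place, the conclusion is a formal consequence of the bijection of Lemma \ref{lemma-H}, whose first component is restriction to the isotropy group.
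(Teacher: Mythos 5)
Your proof is correct, and it reaches the conclusion by a slightly different route than the paper. You transport the base point of hypothesis \textbf{(H)} itself: replacing $\Omega$ by $\tilde{\Omega}(y)=\Omega(x)^{-1}\cdot\Omega(y)$ shows that \textbf{(H)} holds at $x$, and then Lemma \ref{lemma-H} applied at $x$ to the pair $(\lambda',\mathrm{Id}_V)$ produces $\Lambda'$ directly, with the restriction property immediate since the first component of that bijection is restriction to $\mathcal{G}_x^x$. The paper instead keeps the base point $a$ fixed: it picks a single $\omega\in\mathcal{G}_a^x$, conjugates $\lambda'$ to a representation $\lambda(\alpha):=\lambda'(\omega^{-1}\cdot\alpha\cdot\omega)$ of $\mathcal{G}_a^a$, lifts $\lambda$ to some $\Lambda\in\mathcal{A}(\mathcal{G})$ via Lemma \ref{lemma-H}, and defines $\Lambda'(\beta):=\Lambda(\omega)\Lambda(\beta)\Lambda(\omega)^{-1}$, checking by hand that this restricts to $\lambda'$ on $\mathcal{G}_x^x$. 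Both arguments rest on the same nontrivial input, namely the surjectivity half of the bijection in Lemma \ref{lemma-H}; your version buys a cleaner bookkeeping (no conjugation of the representation, and the identity $\Lambda'|_{\mathcal{G}_x^x}=\lambda'$ falls out of the inverse formula since $\tilde{\Omega}(x)={\bf 1}_x$), and it makes explicit the useful observation that \textbf{(H)} is independent of the choice of distinguished object, which the paper only implicitly exploits.
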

\begin{proof}
Since $\mathcal{G}$ satisfies {\bf (H)} for $a \in \mathcal{G}^{(0)}$ and $\Omega : \mathcal{G}^{(0)} \to \mathcal{G}$, by Lemma \ref{lemma-H}, for any unitary representation $\lambda$ of $\mathcal{G}_a^a$, there is a unitary representation $\Lambda$ of $\mathcal{G}$ such that $V_\lambda = V_\Lambda$ and $\Lambda|_{\mathcal{G}_a^a} = \lambda$. 

Given $x \in \mathcal{G}^{(0)}$ and $\omega \in \mathcal{G}_a^x$, we define $\lambda(\alpha) := \lambda'(\omega^{-1} \cdot \alpha \cdot \omega)$ for each $\alpha \in \mathcal{G}_a^a$. It follows that $\lambda$ is a unitary representation of $\mathcal{G}_a^a$ and, therefore, there is $\Lambda$ unitary representation of $\mathcal{G}$ such that $V_\lambda = V_\Lambda$ and $\Lambda|_{\mathcal{G}_a^a} = \lambda$. Define $\Lambda'(x) := V_\Lambda$ and for each $\beta \in \mathcal{G}_x^x$ consider 
\begin{equation*}
\Lambda'(\beta) := \Lambda(\omega) \Lambda(\beta) \Lambda(\omega)^{-1} \;.
\end{equation*}

Since $x \in \mathcal{G}^{(0)}$ is arbitrary, it follows that $\Lambda'$ is a unitary representation of $\mathcal{G}$ (because $\Lambda$ is unitary representation of $\mathcal{G}$). Moreover, as $\omega \cdot \beta \cdot \omega^{-1} \in \mathcal{G}_a^a$, we obtain that
\begin{align*}
\Lambda'(\beta) 
= \Lambda(\omega) \Lambda(\beta) \Lambda(\omega)^{-1} 
&= \Lambda(\omega \cdot \beta \cdot \omega^{-1}) \\
&= \lambda(\omega \cdot \beta \cdot \omega^{-1}) 
= \lambda(\omega^{-1} \cdot (\omega \cdot \beta \cdot \omega^{-1}) \cdot \omega) 
= \lambda'(\beta) \;.
\end{align*}

That is, $\lambda' = \Lambda'|_{\mathcal{G}_x^x}$, such as we wanted to prove.
\end{proof}

Now we want to present the construction of  a set of homomorphisms of algebras $\mathfrak{T}(\mathcal{G})$ such that there exists an epimorphism from the groupoid $\mathcal{G}$ into $\mathfrak{T}(\mathcal{G})$ (when the last one is equipped with a suitable product). The above will be done under the assumption that the groupoid $\mathcal{G}$ is Hausdorff compact, which is presented as an extension of the so-called Tannaka's Lemma to the matter of topological groupoids.

Given $\Lambda \in \mathcal{A}(\mathcal{G})$ and a pair $u, v \in V_\Lambda$, we set $f^\Lambda_{u, v} \in \mathrm{C}(\mathcal{G})$ by the equation 
\[
f^\Lambda_{u, v}(\alpha) := (\Lambda(\alpha)(u), v) \;.
\] 

When $(\Lambda_{ij}(\alpha))_{n \times n}$ is a matrix representation of $\Lambda(\alpha)$, $v = (v_1, ... , v_n)$ and $u = (u_1, ... , u_n)$, it is not difficult to check that the following expression holds true
\begin{equation}
\label{f-matrix}
f^\Lambda_{u, v}(\alpha) = \sum_{i, j = 1}^n v_i \Lambda_{ij}(\alpha) u_j \;.
\end{equation}

An straightforward argument shows that $f \in \xi(\mathcal{G})$ if, and only if, $\overline{f} \in \xi(\mathcal{G})$. In fact, we have the following conditions:
\begin{enumerate}[a)]
\item Given $\Lambda \in \mathcal{A}(\mathcal{G})$, we have $f^{\Lambda^*}_{u, v} = \overline{f^\Lambda_{u, v}}$, where $\Lambda^*(x) = \Lambda(x) = V_\Lambda$ for each $x \in \mathcal{G}^{(0)}$ and $\Lambda^*(\alpha) = \Lambda(\alpha^{-1})$ for any $\alpha \in \mathcal{G}$ (in particular $\Lambda^* \in \mathcal{A}(\mathcal{G})$); 
\item $f^\Lambda_{e_i, e_j}(\alpha) = \Lambda_{ij}(\alpha)$ for any $\alpha \in \mathcal{G}$ and each $\Lambda \in \mathcal{A}(\mathcal{G})$;
\item We have $f^{\Lambda \oplus \Lambda'}_{u \oplus u', v \oplus v'} = f^\Lambda_{u, v} + f^{\Lambda'}_{u', v'}$ and $f^{\Lambda \otimes \Lambda'}_{u \otimes u', v \otimes v'} = f^\Lambda_{u, v} f^{\Lambda'}_{u', v'}$ for any pair $\Lambda, \Lambda' \in \mathcal{A}(\mathcal{G})$. 
\end{enumerate}

The former assumptions imply that the set of functions 
\begin{equation}
\label{xi(G)}
\xi(\mathcal{G}) := \{f^\Lambda_{u, v} :\; \Lambda \in \mathcal{A}(\mathcal{G}),\; u, v \in V_\Lambda \} \subset \mathrm{C}(\mathcal{G}) \;, 
\end{equation}
is a sub-algebra with unit of $\mathrm{C}(\mathcal{G})$, i.e., satisfying ${\bf 1} \in \xi(\mathcal{G})$. 

\begin{remark}
As a particular case of $\eqref{xi(G)}$, we obtain that $\xi(G)$ is well defined when $G$ is a topological group and also results in a sub-algebra of $\mathrm{C}(G)$ with the unit. In particular, we have that condition for any isotropy group $G := \mathcal{G}_a^a$. 
\end{remark}

Our next goal is to introduce a suitable definition of the {\it mean value function} $M_G$ associated with a Hausdorff compact group $G$ (similar to the one presented in \cite{MR1336382}). The existence of such a function is guaranteed in the following lemma.
\begin{lemma}
\label{mean-value-lemma}
Consider a compact group $G$. Then, there is a linear application $M_G : \mathrm{C}(G) \to \mathbb{C}$ satisfying the following conditions:
\begin{enumerate}[i)]
\item $M_G({\bf 1}) = 1$.
\item $M_G(f) \geq 0$ when $f \in \mathrm{C}^{\geq 0}$. Moreover, $M_G(f) = 0$ implies that $f \equiv 0$.
\item $|M_G(f)| \leq M_G(|f|)$.
\item Given $x, y \in G$, define $f_x(y) := f(xy)$, $f^x(y) := f(yx)$, we have
\begin{equation*}
M_G(f) = M_G(f^x) = M_G(f_x) = M_G(f^*) \;.
\end{equation*} 
\end{enumerate}
\end{lemma}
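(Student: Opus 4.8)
The plan is to realize $M_G$ as the normalized bi-invariant mean on the compact group $G$, constructed by the classical averaging method of the cited reference. For $f \in \mathrm{C}(G)$ I write $f_x(y) = f(xy)$ for the left translate, and I consider the set $K_f$ defined as the closure in $(\mathrm{C}(G), \|\cdot\|_\infty)$ of the convex hull of the orbit $\{f_x :\; x \in G\}$. The mean value $M_G(f)$ will be defined as the unique constant function contained in $K_f$, so the whole argument reduces to showing that such a constant exists and is unique. First I would check that $K_f$ is compact: since $G$ is compact, $f$ is uniformly continuous, so the translates $\{f_x\}$ are uniformly bounded by $\|f\|_\infty$ and equicontinuous, and by Arzela--Ascoli their closed convex hull $K_f$ is a compact convex subset of $\mathrm{C}(G)$. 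The real oscillation functional $g \mapsto \mathrm{osc}(g) := \max g - \min g$, applied to real and imaginary parts separately, is continuous and hence attains its minimum on $K_f$.

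The hard part will be the oscillation-reduction lemma: for real $f$ the infimum of $\mathrm{osc}(g)$ over $g \in K_f$ is $0$, so the minimizer is a constant. To prove it I would fix $\epsilon > 0$, use uniform continuity to pick a symmetric neighborhood $U$ of the identity with $|f(u) - f(v)| < \epsilon$ whenever $u^{-1}v \in U$, cover $G$ by finitely many translates $x_1 U, \dots, x_n U$, and form the average $A f := \tfrac{1}{n}\sum_{i=1}^n f_{x_i^{-1}}$. A direct estimate then shows that this averaging contracts the oscillation by a definite factor, up to an error controlled by $\epsilon$; iterating drives $\mathrm{osc}$ below any prescribed threshold. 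This is the crux of the argument and the only place where the group structure is genuinely used.

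Next I would establish uniqueness of the constant in $K_f$ and thereby define $M_G(f)$. The key observation is that left averages commute with right averages: if $c_1$ is a constant lying in the closed convex hull of the left translates and $c_2$ is the constant produced by the symmetric construction using the right translates $f^x$, then applying a right-averaging operator (which fixes constants) to the net producing $c_1$ forces $c_1 = c_2$, pinning down a single scalar. Linearity of $M_G$ then follows because $K_{f+g} \subset \overline{K_f + K_g}$ and the distinguished constant is additive under this operation, while homogeneity is immediate from the definition of $K_f$.

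Finally the four properties should fall out. Property (i) is immediate, since the constant $\mathbf{1}$ is fixed by every average. Monotonicity and positivity in (ii) hold because each element of $K_f$ is a uniform limit of convex combinations of translates of $f$, hence nonnegative whenever $f \geq 0$; faithfulness follows from a covering argument, namely if $f \geq 0$ with $f(x_0) = 2\epsilon > 0$, then continuity gives $f \geq \epsilon$ on a neighborhood $V$, and covering $G$ by finitely many right translates of $V$ produces a right average of $f$ bounded below by a positive constant, whence $M_G(f) > 0$ by right-invariance and monotonicity. Property (iii) follows by choosing $\theta$ with $e^{i\theta} M_G(f) = |M_G(f)|$ and using $\mathrm{Re}(e^{i\theta} f) \leq |f|$ together with monotonicity. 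The invariances in (iv) follow from the symmetry of the construction: replacing $f$ by $f_x$ or by $f^x$ merely reindexes the orbit generating $K_f$, and the inversion $y \mapsto y^{-1}$ interchanges left and right translates, so the unique bi-invariant constant is unchanged.
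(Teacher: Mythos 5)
Your proposal is correct in outline, but it takes a much longer road than the paper does. The paper's entire proof is one line: set $M_G(f) := \int_G f\, dm$ where $m \in \mathcal{M}_1(G)$ is the Haar probability measure, and read off properties i)--iv) from the standard invariance and full-support properties of $m$. What you have written is, in effect, a proof of the \emph{existence} of that Haar mean: the von Neumann--Kakutani averaging argument (closed convex hull $K_f$ of the translates, compactness via Arzel\`a--Ascoli, oscillation reduction to extract the unique invariant constant, commutation of left and right averages for uniqueness and bi-invariance). This is essentially the construction in the cited reference \cite{MR1336382}, Section XI, so you are reproving the black box the paper invokes. The trade-off: the paper's version is short but delegates all the real work to a citation, while yours is self-contained and, in one respect, more careful --- your covering argument for faithfulness in item ii) (that $f \geq 0$ and $M_G(f) = 0$ force $f \equiv 0$) is exactly the point that needs the full support of the Haar measure and that the paper's proof does not address. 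Two small cautions if you flesh this out: the oscillation-reduction step is the genuinely delicate part and should be argued via the attained minimum of $\mathrm{osc}$ on the compact set $K_f$ (if the minimizer had positive oscillation, a further average would strictly decrease it), rather than by an iteration with accumulating $\epsilon$-errors; and for the identity $M_G(f) = M_G(f^*)$ note that with the paper's convention $f^*(y) = \overline{f(y^{-1})}$ your inversion argument gives $M_G(f^*) = \overline{M_G(f)}$, so the stated equality only holds as written for $f$ with real mean --- a blemish inherited from the statement itself, but worth flagging rather than passing over with ``the constant is unchanged.''
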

\begin{proof}
It is enough to choose the map $M_G$ satisfying the equation
\begin{equation*}
M_G(f) := \int_G f dm \;,
\end{equation*}
where $m \in \mathcal{M}_1(G)$ is a Haar measure.
\end{proof}

Observe that the map $M_G$ is continuous on $\mathrm{C}(G)$. The above, because it is linear and by item $ii)$ of Lemma \ref{mean-value-lemma}, we have $|M_G(f)| \leq M_G(\|f\|_\infty) = \|f\|_\infty$. 

Now we will present some properties of the mean value calculated on functions $f^\lambda_{u, v}$ belonging to $\xi(G)$, when $G$ is a compact group.

\begin{lemma}
\label{mv-lemma}
Consider a compact group $G$ and $\lambda \in \mathcal{A}(G)$ irreducible.
\begin{enumerate}
\item If $\lambda \sim \lambda_0$, then, $M_G(f^\lambda_{u, v}) = (u, v)$ for any pair $u, v \in V_\lambda$.
\item If $\lambda \nsim \lambda_0$, then, $M_G(f^\lambda_{u, v}) = 0$ for each $u, v \in V_\lambda$.
\end{enumerate}
\end{lemma}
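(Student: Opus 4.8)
The plan is to treat the two cases separately, with the second being the substantive one. For part (1), I would first observe that a unitary equivalence $\lambda \sim \lambda_0$ means there is a unitary isomorphism $\varphi$ with $\varphi\lambda(\alpha) = \lambda_0(\alpha)\varphi = \varphi$ for every $\alpha \in G$, since $\lambda_0(\alpha) = \mathrm{Id}$. Hence $\lambda(\alpha) = \mathrm{Id}_{V_\lambda}$ identically, so $f^\lambda_{u,v}(\alpha) = (\lambda(\alpha)u, v) = (u,v)$ is the constant function with value $(u,v)$. Linearity of $M_G$ together with $M_G(\mathbf{1}) = 1$ (property $i)$ of Lemma \ref{mean-value-lemma}) then gives $M_G(f^\lambda_{u,v}) = (u,v)$ at once.

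For part (2), I would introduce the sesquilinear form $B(u,v) := M_G(f^\lambda_{u,v})$ on $V_\lambda$, which is linear in $u$ and conjugate-linear in $v$. The key input is the translation invariance $M_G(f) = M_G(f_x)$ from property $iv)$ of Lemma \ref{mean-value-lemma}, combined with the unitarity of $\lambda$. Writing $f_x(\alpha) = f(x\alpha)$ and using $\lambda(x\alpha) = \lambda(x)\lambda(\alpha)$ together with $\lambda(x)^* = \lambda(x^{-1})$, I would compute $(f^\lambda_{u,v})_x = f^\lambda_{u,\,\lambda(x^{-1})v}$. Invariance then yields $B(u,v) = B(u, \lambda(g)v)$ for every $g \in G$.

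Next I would represent $B$ by a linear operator $T : V_\lambda \to V_\lambda$ defined through $B(u,v) = (Tu, v)$; this is well defined because $V_\lambda$ is finite-dimensional and $v \mapsto \overline{B(u,v)}$ is linear, so Riesz representation applies and $u \mapsto Tu$ is linear. The relation $B(u, \lambda(g)v) = B(u,v)$ becomes $(Tu, \lambda(g)v) = (Tu, v)$, i.e. $(\lambda(g)^* Tu, v) = (Tu, v)$ for all $v$, whence $\lambda(g^{-1})Tu = Tu$ for every $g$. Thus the image of $T$ is contained in the subspace $W := \{w \in V_\lambda : \lambda(g)w = w \text{ for all } g \in G\}$ of $\lambda$-fixed vectors.

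Finally I would invoke irreducibility. The subspace $W$ is clearly stable by $\lambda$, so by Lemma \ref{reducible-lemma} it is either $\{0\}$ or all of $V_\lambda$; the latter would force $\lambda(g) = \mathrm{Id}_{V_\lambda}$ for every $g$, hence $\lambda \sim \lambda_0$, contradicting the hypothesis $\lambda \nsim \lambda_0$. Therefore $W = \{0\}$, which gives $T = 0$ and so $B \equiv 0$, i.e. $M_G(f^\lambda_{u,v}) = 0$ for all $u,v \in V_\lambda$. I expect the main obstacle to be exactly this last dichotomy: one must argue carefully that the fixed-vector subspace is $\lambda$-invariant and that its being the whole space is precisely the excluded trivial case, so that irreducibility can be applied to annihilate $T$. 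The only other point requiring care is the bookkeeping of the inner-product convention and the identity $\lambda(x)^* = \lambda(x^{-1})$ used when converting translation invariance into the fixed-point property of $T$.
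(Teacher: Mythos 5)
Your proof is correct and follows essentially the same route as the paper's: translation invariance of $M_G$ turns the mean value into a $\lambda$-invariant object, Lemma \ref{reducible-lemma} forces the associated stable subspace to be trivial or all of $V_\lambda$, and the nontrivial branch is excluded because it would give $\lambda \sim \lambda_0$. The only (harmless) difference is packaging: the paper fixes $v$ and applies the dichotomy to $\mathrm{Ker}(L_v)$ where $L_v(u) := M_G(f^\lambda_{u, v})$, whereas you apply it to the fixed-vector subspace $W$ containing the image of the operator $T$ representing the sesquilinear form $B$; both are sound.
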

\begin{proof}
Suppose that $\lambda \sim \lambda_0$, thus, we have $\lambda(x) = \mathrm{Id}_{V_\lambda}$ for each $x \in G$. By the above, it follows that $f^\lambda_{u, v}(x) = (u, v)$ for any pair $u, v \in V_\lambda$ which implies 
\begin{equation*}
M_G(f^\lambda_{u, v}) := \int_G f^\lambda_{u, v}(x) dm(x) = \int_G (u, v) dm = (u, v) \;.
\end{equation*}

On the other hand, assuming that $\lambda \nsim \lambda_0$. For each $v \in V_\lambda$, we define the linear map $L_v : V_\lambda \to \mathbb{C}$ by the expression $L_v(u) := M_G(f^\lambda_{u, v})$. Then, defining $f(x) = f^\lambda_{u, v}(x)$ for each $x \in G$, it follows that any $y \in G$ satisfies
\begin{align*}
L_v(\lambda(y)u) 
= M_G(f^\lambda_{\lambda(y)u, v}) 
&= M_G(f^y) \\
&= M_G(f^\lambda_{u, v}) = M_G(f) = L_v(u) \;.
\end{align*}

The former expression guarantees that the space $\mathrm{Ker}(L_v)$ is stable by $\lambda$. Furthermore, since $\lambda$ is irreducible, by Lemma \ref{reducible-lemma}, it follows that $\mathrm{Ker}(L_v) = \{0\}$ or $\mathrm{Ker}(L_v) = V_\lambda$ either. 

In the first case, we obtain that $\lambda(y)u = u$ for any $y \in G$ and each $u \in V_\lambda$ and, thus, $\lambda \sim \lambda_0$, which contradicts our assumption. 

So, we need to have $L_v \equiv 0$ such as we wanted to prove. 
\end{proof}

Our next goal is to present a characterization of surjective group homomorphisms in terms of the mean value function when suitable conditions are given.

\begin{lemma}
\label{sur-lemma}
Consider compact groups $G, H$, with $H$ Hausdorff, let $F : G \to H$ be a continuous groups homomorphism such that $M_G(g \circ F) = M_H(g)$ for any function $g \in \mathrm{C}(H)$. Then, $F$ is surjective.
\end{lemma}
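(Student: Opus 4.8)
The plan is to argue by contraposition, exploiting the faithfulness of the mean value function (item $ii)$ of Lemma \ref{mean-value-lemma}) together with a separation argument on $H$. First I would observe that, since $G$ is compact and $F$ is continuous, the image $F(G)$ is a compact subset of $H$; as $H$ is Hausdorff, compact subsets are closed, so $F(G)$ is a \emph{closed} subset of $H$. This is exactly the point where the Hausdorff hypothesis on $H$ is used, so it should be flagged explicitly.

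Next, suppose for contradiction that $F$ is not surjective. Then there exists a point $h_0 \in H \setminus F(G)$, and $\{h_0\}$ and $F(G)$ are two disjoint closed subsets of the compact Hausdorff (hence normal) space $H$. By Urysohn's lemma I would produce a continuous function $g : H \to [0, 1]$ with $g(h_0) = 1$ and $g \equiv 0$ on $F(G)$. By construction $g \in \mathrm{C}^{\geq 0}(H)$ and $g \not\equiv 0$, while $g \circ F \equiv 0$ on $G$.

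The contradiction then follows from comparing the two sides of the hypothesis. On one hand, $g \circ F$ is identically zero, so by linearity $M_G(g \circ F) = 0$. On the other hand, $g$ is a non-negative continuous function that is not identically zero, so by item $ii)$ of Lemma \ref{mean-value-lemma} applied to $H$ we have $M_H(g) > 0$ (the vanishing of $M_H(g)$ would force $g \equiv 0$, contradicting $g(h_0) = 1$). But the assumption $M_G(g \circ F) = M_H(g)$ then yields $0 = M_H(g) > 0$, which is absurd. Hence $F$ must be surjective.

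I expect the only substantive step to be the topological one, namely guaranteeing the separating function $g$; this is routine once $F(G)$ is known to be closed, which in turn relies on compactness of $G$ and the Hausdorff property of $H$. The analytic heart of the argument is entirely packaged into the faithfulness and positivity of the mean value function already established in Lemma \ref{mean-value-lemma}, so no further estimates are needed.
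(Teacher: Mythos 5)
Your argument is correct and follows essentially the same route as the paper's proof: contradiction via Urysohn's lemma applied to the disjoint closed sets $\{h_0\}$ and $F(G)$, then comparing $M_G(g\circ F)=0$ with $M_H(g)>0$ using the faithfulness in item $ii)$ of Lemma \ref{mean-value-lemma}. You are in fact slightly more explicit than the paper in justifying that $F(G)$ is closed (compactness of $G$ plus the Hausdorff property of $H$), which the paper merely asserts.
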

\begin{proof}
Suppose that there exists $y \in H$ such that $y \notin F(G)$. Since $\{y\}$ and $F(G)$ is closed and $H$ is Hausdorff compact, by Urysohn's Lemma, there exists $g \in \mathrm{C}^{\geq 0}(H)$ such that $g|_{F(G)} \equiv 0$ and $g(y) = 1$. Therefore, it follows that $M_H(g) > 0 = M_G(g \circ F)$ which contradicts our hypothesis. 

By the above, we obtain that $F(G) = H$, and our claim holds.
\end{proof}

Consider $\lambda \in \mathcal{A}(G)$ and define
\begin{equation}
\label{eigenspace}
W_\lambda := \{u \in V_\lambda : \lambda(x)u = u \text{ for all } x \in G\} \;.
\end{equation}

In the following lemma, we present a characterization of the mean value map $M_G$ on functions belonging to $\xi(G)$.

\begin{lemma}
\label{dec-lemma}
Consider a compact group $G$, $\lambda \in \mathcal{A}(G)$ and $u, v \in V_\lambda$. Then, for any function $f^\lambda_{u, v} \in \xi(G)$, we have $M_G(f^\lambda_{u, v}) = (u', v)$, where $u'$ is the orthogonal projection of $u$ on $W_\lambda$. 
\end{lemma}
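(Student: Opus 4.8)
The plan is to reduce the general unitary representation $\lambda$ to its irreducible constituents and then invoke Lemma \ref{mv-lemma}, which already settles the two extreme cases (trivial versus non-trivial irreducible). Concretely, since $\lambda \in \mathcal{A}(G)$ is unitary, by Lemma \ref{reducible-lemma} and the remark following it I would decompose $V_\lambda$ into an \emph{orthogonal} direct sum of $\lambda$-stable irreducible subspaces $V_\lambda = V_1 \oplus \cdots \oplus V_k$, and set $\lambda_i := \lambda|_{V_i}$. Writing $u = \sum_i u_i$ and $v = \sum_i v_i$ with $u_i, v_i \in V_i$, the additivity property $f^{\Lambda \oplus \Lambda'}_{u \oplus u', v \oplus v'} = f^\Lambda_{u, v} + f^{\Lambda'}_{u', v'}$ recorded above yields $f^\lambda_{u, v} = \sum_{i=1}^k f^{\lambda_i}_{u_i, v_i}$ pointwise on $G$. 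The point here is that because $\lambda(x) V_i \subset V_i$ and the $V_i$ are mutually orthogonal, every cross term $(\lambda(x) u_i, v_j)$ with $i \neq j$ vanishes, so no off-diagonal contribution survives.

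Applying the linearity of $M_G$ together with Lemma \ref{mv-lemma} then gives
\[
M_G(f^\lambda_{u, v}) = \sum_{i = 1}^k M_G(f^{\lambda_i}_{u_i, v_i}) = \sum_{i : \lambda_i \sim \lambda_0} (u_i, v_i) \;,
\]
since the summands indexed by the non-trivial irreducibles $\lambda_i \nsim \lambda_0$ all vanish while those indexed by the trivial ones equal $(u_i, v_i)$.

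It remains to recognize the right-hand side as $(u', v)$. I would argue that $W_\lambda$ coincides with the orthogonal sum of precisely those $V_i$ on which $\lambda$ acts trivially: an irreducible constituent with a non-zero invariant vector has $W_{\lambda_i}$ stable and non-trivial, hence (by irreducibility, via Lemma \ref{reducible-lemma}) equal to $V_i$, forcing $\lambda_i \sim \lambda_0$; conversely every trivial component sits inside $W_\lambda$. Consequently the orthogonal projection of $u$ onto $W_\lambda$ is exactly $u' = \sum_{i : \lambda_i \sim \lambda_0} u_i$, and using orthogonality of the decomposition once more, $(u', v) = \sum_{i : \lambda_i \sim \lambda_0} (u_i, v_i)$, which matches the expression above and closes the argument.

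The routine parts are the additivity of matrix coefficients and the bookkeeping with orthogonal projections; the step that needs the most care is the identification $W_\lambda = \bigoplus_{\lambda_i \sim \lambda_0} V_i$, i.e. checking that the invariant vectors are captured \emph{exactly} by the trivial irreducible summands and that this is independent of the particular irreducible decomposition chosen. An essentially equivalent alternative would be to split $V_\lambda = W_\lambda \oplus W_\lambda^\bot$ at the outset, verify that $W_\lambda^\bot$ is $\lambda$-stable (as in Lemma \ref{reducible-lemma}) and carries no non-zero invariant vector, and then evaluate the two pieces directly: the constant part contributes $(u', v)$ through $M_G({\bf 1}) = 1$ from Lemma \ref{mean-value-lemma}, while the $W_\lambda^\bot$ part vanishes by Lemma \ref{mv-lemma} after a further decomposition into non-trivial irreducibles.
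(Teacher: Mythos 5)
Your argument is correct and follows essentially the same route as the paper's proof: decompose $\lambda$ into an orthogonal sum of irreducibles, apply Lemma \ref{mv-lemma} componentwise, and identify $W_\lambda$ with the sum of the trivial constituents. Your justification that a non-trivial irreducible summand carries no non-zero invariant vector is in fact spelled out more carefully than in the paper, which asserts this step rather tersely.
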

\begin{proof}
Let $\lambda = \lambda_1 \oplus ... \oplus \lambda_n$ be the decomposition of the unitary representation $\lambda$ in irreducible components, i.e., we have $V_\lambda = V_{\lambda_1} \oplus ... \oplus V_{\lambda_n}$. Without loss of generality, we can assume that $\lambda_i \sim \lambda_0$ for $i = 1, ... , k$ and $\lambda_i \nsim \lambda_0$ for $i = k+1, ... , n$ (i.e. the first $k$ components are identity representations). Then, by Lemma \ref{mv-lemma}, taking $u = u_1 + ... + u_n$ and $v = v_1 + ... + v_n$, with $u_j, v_j \in V_{\lambda_j}$, it follows that
\begin{equation*}
M_G(f^\lambda_{u, v}) = M_G\Bigl(\sum_{i=1}^n f^{\lambda_i}_{u_i, v_i}\Bigr) = \sum_{i=1}^n M_G(f^{\lambda_i}_{u_i, v_i}) \;.
\end{equation*}

On the other hand, given any $x \in G$, we have 
\begin{equation*}
\lambda(x)u = \sum_{i=1}^n \lambda_i(x)u_i = \sum_{i=1}^n u_i \;,
\end{equation*}
with $\lambda_i \nsim \lambda_0$ for $i = k+1, ... , n$. Then, $\lambda_i(x)u_i = u_i$ for all $x \in G$ implies that $u_i = 0$ when $i = k+1, ... , n$. 

By the above, it follows that $W_\lambda = V_{\lambda_1} \oplus ... \oplus V_{\lambda_k}$ and $u' = u_1 + ... + u_k$. Besides that, by Lemma \ref{mv-lemma}, we have
\begin{equation*}
M_G(f^\lambda_{u, v}) = \sum_{i=1}^k (u_i, v_i) = (u', v) \;,
\end{equation*}
such as we wanted to prove.
\end{proof}

Given a family $\mathcal{F} \subset \mathcal{A}(G)$, we say that $\mathcal{F}$ is {\it complete} when satisfies the following conditions
\begin{enumerate}[i)]
\item $\lambda_0 \in \mathcal{F}$;
\item For any pair $\lambda_1, \lambda_2 \in \mathcal{F}$, we have $\lambda_1 \oplus \lambda_2 \in \mathcal{F}$ and $\lambda_1 \otimes \lambda_2 \in \mathcal{F}$.
\item $\overline{\lambda} \in \mathcal{F}$ when $\lambda \in \mathcal{F}$.
\end{enumerate}

Next, we characterize the surjectivity of continuous group homomorphisms in terms of complete families of unitary representations on compact groups. 

\begin{lemma}
\label{sur-main-lemma}
Consider $G$ and $H$ compact groups, with $H$ Hausdorff, and let $F : G \to H$ be a continuous group homomorphism. Assume that there exists a complete family $\mathcal{F} \subset \mathcal{A}(G)$ such that for each $\lambda \in \mathcal{F}$ there is $\lambda' \in \mathcal{A}(H)$ such that $V_\lambda = V_{\lambda'} = V$ and the following conditions hold true:
\begin{enumerate}[i)]
\item $\lambda_0 = \lambda'_0$;
\item If $\lambda_1, \lambda_2 \in \mathcal{F}$, then, $(\lambda_1 \oplus \lambda_2)' = \lambda'_1 \oplus \lambda'_2$ and $(\lambda_1 \otimes \lambda_2)' = \lambda'_1 \otimes \lambda'_2$;
\item $\overline{\lambda'} = (\overline{\lambda})'$ for any $\lambda \in \mathcal{F}$;
\item $\lambda(x) = \lambda'(F(x))$ for each $x \in G$;
\item If $y_1, y_2 \in H$ and $y_1 \neq y_2$, then, there are $\lambda \in \mathcal{F}$ and $u, v \in V$ such that $f^{\lambda'}_{u, v}(y_1) \neq f^{\lambda'}_{u, v}(y_2)$;
\item For any $\lambda \in \mathcal{F}$, there are $u, v \in V$ such that $M_H(f^{\lambda'}_{u, v}) = M_G(f^\lambda_{u, v})$.
\end{enumerate}

Then, the map $F$ is surjective. 
\end{lemma}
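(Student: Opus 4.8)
The plan is to deduce the statement from Lemma~\ref{sur-lemma}. Since $F\colon G\to H$ is already assumed to be a continuous group homomorphism, it suffices to verify the single remaining hypothesis of that lemma, namely that $M_G(g\circ F)=M_H(g)$ for every $g\in\mathrm{C}(H)$, and then invoke Lemma~\ref{sur-lemma} to obtain surjectivity. Both $g\mapsto M_G(g\circ F)$ and $g\mapsto M_H(g)$ are continuous linear functionals on $\mathrm{C}(H)$: the composition map $g\mapsto g\circ F$ is linear and norm-nonincreasing (as $F$ maps into $H$), while $M_G$ and $M_H$ are continuous by the remark following Lemma~\ref{mean-value-lemma}. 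Hence, by continuity, it will be enough to check that these two functionals agree on a dense subset of $\mathrm{C}(H)$.

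First I would produce such a dense subset by the complex Stone--Weierstrass theorem. Set $\xi'(H):=\{f^{\lambda'}_{u,v}:\lambda\in\mathcal{F},\ u,v\in V\}$. Transporting the structural properties (a)--(c) through the compatibility conditions (i)--(iii), and using that $\mathcal{F}$ is complete (so that $\lambda_1\oplus\lambda_2$, $\lambda_1\otimes\lambda_2$ and $\overline{\lambda}$ remain in $\mathcal{F}$, whence their primes remain among the $\lambda'$), one checks that $\xi'(H)$ is a unital subalgebra of $\mathrm{C}(H)$ stable under complex conjugation: condition (i) gives $\mathbf{1}\in\xi'(H)$, condition (ii) gives closure under sums and products, and condition (iii) gives closure under conjugation. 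Hypothesis (v) is exactly the statement that $\xi'(H)$ separates the points of the Hausdorff compact space $H$. Consequently Stone--Weierstrass yields $\overline{\xi'(H)}=\mathrm{C}(H)$.

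It then remains to show the two functionals agree on the generators of $\xi'(H)$, that is, $M_G(f^{\lambda'}_{u,v}\circ F)=M_H(f^{\lambda'}_{u,v})$ for each $\lambda\in\mathcal{F}$ and $u,v\in V$. Condition (iv) gives $f^{\lambda'}_{u,v}(F(x))=(\lambda'(F(x))u,v)=(\lambda(x)u,v)=f^\lambda_{u,v}(x)$, so $f^{\lambda'}_{u,v}\circ F=f^\lambda_{u,v}$ and the desired equality collapses to $M_G(f^\lambda_{u,v})=M_H(f^{\lambda'}_{u,v})$. Since $(u,v)\mapsto f^\lambda_{u,v}$ is bilinear by \eqref{f-matrix} and $M_G,M_H$ are linear, it is enough to verify this on the basis pairs $u=e_i$, $v=e_j$, i.e. on the matrix coefficients $M_G(\lambda_{ij})=M_H(\lambda'_{ij})$, which is the content of hypothesis (vi). Combining the three steps, the two functionals coincide on all of $\xi'(H)$, hence on $\mathrm{C}(H)$ by density and continuity, and Lemma~\ref{sur-lemma} concludes that $F$ is surjective.

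The main obstacle I expect is precisely the generator identity $M_G(f^\lambda_{u,v})=M_H(f^{\lambda'}_{u,v})$ for \emph{all} $u,v$, which hypothesis (vi) is designed to supply and which carries the real weight of the argument; the rest is Stone--Weierstrass packaging. Conceptually this identity can be read through Lemma~\ref{dec-lemma}: it says $M_G(f^\lambda_{u,v})=(P_{W_\lambda}u,v)$ and $M_H(f^{\lambda'}_{u,v})=(P_{W_{\lambda'}}u,v)$, so the equality amounts to $W_\lambda=W_{\lambda'}$, i.e. to the coincidence of the $\lambda$-invariant and $\lambda'$-invariant subspaces. Condition (iv) gives only the inclusion $W_{\lambda'}\subseteq W_\lambda$ for free (every $H$-fixed vector is $F(G)$-fixed), and the reverse inclusion is morally the same assertion as the surjectivity being proved; this is exactly why hypothesis (vi) must be postulated, and it is the point at which care is needed to avoid circularity.
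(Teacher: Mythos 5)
Your proposal is correct and follows essentially the same route as the paper's proof: form the conjugation-stable unital subalgebra $\xi'$ of $\mathrm{C}(H)$ using completeness of $\mathcal{F}$ and conditions i)--iii), get density from condition v) and Stone--Weierstrass, match the two mean-value functionals on generators via iv) and vi), and conclude by continuity together with Lemma~\ref{sur-lemma}. Your closing observation that hypothesis vi) must in effect be read as supplying the identity $M_G(f^\lambda_{u,v})=M_H(f^{\lambda'}_{u,v})$ for a spanning set of pairs $u,v$ (the literal ``there are'' is too weak) is a fair point, and the paper's own proof silently makes the same universal reading.
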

\begin{proof}
Since $\mathcal{F}$ is a complete family of unitary representations, by conditions $i)$, $ii)$ and $iii)$, it follows that $\mathcal{F}' := \{\lambda' : \lambda \in \mathcal{F}\} \subset \mathcal{A}(H)$ is also a complete family of unitary representations. By the above, we obtain that 
\begin{equation*}
\xi'(G) := \{f^{\lambda'}_{u, v} : \lambda \in \mathcal{F}, u, v \in V\}\;,
\end{equation*} 
is a sub-algebra with unit of $\mathrm{C}(H)$ such that $f \in \xi'(G)$ implies $\overline{f} \in \xi'(G)$. Furthermore, by condition $v)$ and the Stone-Weierstrass' theorem, it follows that the set $\xi'(G)$ is dense in $\mathrm{C}(H)$. Besides that, by conditions $iv)$ and $vi)$ we have $M_G(g \circ F) = M_H(g)$ for any $g \in \xi'(G)$. So, by continuity of the mean value function, we obtain that $M_G(g \circ F) = M_H(g)$ for any $g \in \mathrm{C}(H)$ which implies that $F$ is surjective by Lemma \ref{sur-lemma}. 
\end{proof}

The former lemma also implies that for any $\lambda \in \mathcal{F}$, each $u, v \in V_\lambda$ and any $x \in G$, the following expression holds true
\begin{equation*}
f^{\lambda'}_{u, v}(F(x)) = (\lambda'(F(x))u, v) = (\lambda(x)u, v) = f^\lambda_{u, v}(x) \;.
\end{equation*}

Consider the set of homomorphisms of algebras
\begin{equation*}
\mathfrak{T}(\mathcal{G}) := \{ T : \xi(\mathcal{G}) \to \mathbb{C} :\; T({\bf 1}) = 1 \text{ and } T(\overline{f})=\overline{T(f)} \} \;.
\end{equation*}

Note that given any $\alpha \in \mathcal{G}$, the map $T_\alpha : \xi(\mathcal{G}) \to \mathbb{C}$ given by the equation $T_\alpha(f) := f(\alpha)$ belongs to the set $\mathfrak{T}(\mathcal{G})$, because $T_\alpha({\bf 1}) = {\bf 1}(\alpha) = 1$ and for any $f \in \xi(\mathcal{G})$ we have $T_\alpha(\overline{f}) = \overline{f}(\alpha) = \overline{f(\alpha)} = \overline{T_\alpha(f)}$. So, the set $\mathfrak{T}(\mathcal{G})$ is non-empty.

Given $T \in \mathfrak{T}(\mathcal{G})$ and any $\Lambda \in \mathcal{A}(\mathcal{G})$, we define the linear map $L^\Lambda_T : V_\Lambda \to V_\Lambda$ by the equation
\begin{equation}
\label{L-T-map}
(L^\Lambda_T(u), v) := T(f^\Lambda_{u, v}) \;,
\end{equation}
where $u, v \in V_\Lambda$. It is not difficult to check that the matrix representation of $L^\Lambda_T$ with respect to the orthonormal basis $\{e_1, ... , e_n\}$ is of the form $(T(f^\Lambda_{e_i, e_j}))_{n \times n}$. Moreover, by \eqref{L-T-map} and conditions $a)$, $b)$ and $c)$ stated above, it follows that: 
\begin{enumerate}[a')]
\item $L^{\Lambda_0}_T = \mathrm{Id}_{V_{\Lambda_0}}$. 
\item We have  $L^{\Lambda \oplus \Lambda'}_T = L^\Lambda_T \oplus L^{\Lambda'}_T$, $L^{\Lambda \otimes \Lambda'}_T = L^\Lambda_T \otimes L^{\Lambda'}_T$ and $L^{\Lambda^*}_T = \overline{L^\Lambda_T}$, for any pair $\Lambda, \Lambda' \in \mathcal{A}(\mathcal{G})$.
\end{enumerate}

In particular, the above implies $L^\Lambda_T \in \mathrm{U}(V_\Lambda)$.
\begin{lemma}
\label{wd-lemma}
Consider $\Lambda, \Lambda' \in \mathcal{A}(\mathcal{G})$, $u, v \in V_\Lambda$ and $u', v' \in V_{\Lambda'}$. Assume that $f^\Lambda_{u, v}(\alpha) = f^{\Lambda'}_{u', v'}(\alpha)$ for each $\alpha \in \mathcal{G}$. Then, $(L^\Lambda_T(u), v) = (L^{\Lambda'}_T(u'), v')$ for any $T \in \mathfrak{T}(\mathcal{G})$.
\end{lemma}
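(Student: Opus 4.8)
The plan is to observe that the conclusion is, after unfolding the definition \eqref{L-T-map} of the operators $L^\Lambda_T$ and $L^{\Lambda'}_T$, essentially a restatement of the hypothesis combined with the fact that $T$ is a genuine single-valued map on $\xi(\mathcal{G})$. Recall that by \eqref{L-T-map} we have $(L^\Lambda_T(u), v) = T(f^\Lambda_{u, v})$ and likewise $(L^{\Lambda'}_T(u'), v') = T(f^{\Lambda'}_{u', v'})$. Both $f^\Lambda_{u, v}$ and $f^{\Lambda'}_{u', v'}$ are, by construction, elements of the algebra $\xi(\mathcal{G}) \subset \mathrm{C}(\mathcal{G})$, so it suffices to verify that they are one and the same element of $\xi(\mathcal{G})$ and then to apply $T$.

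First I would note that $\xi(\mathcal{G})$ is a space of honest complex-valued functions on $\mathcal{G}$, so equality inside $\xi(\mathcal{G})$ means nothing more than pointwise equality on $\mathcal{G}$. The hypothesis $f^\Lambda_{u, v}(\alpha) = f^{\Lambda'}_{u', v'}(\alpha)$ for every $\alpha \in \mathcal{G}$ is therefore precisely the assertion that $f^\Lambda_{u, v}$ and $f^{\Lambda'}_{u', v'}$ coincide as elements of the domain of $T$. Applying $T$ to this common function and chaining the equalities then gives
\begin{equation*}
(L^\Lambda_T(u), v) = T(f^\Lambda_{u, v}) = T(f^{\Lambda'}_{u', v'}) = (L^{\Lambda'}_T(u'), v') \;,
\end{equation*}
where the two outer equalities are the definition \eqref{L-T-map} and the middle one holds because $T$ assigns a single value to the single element $f^\Lambda_{u, v} = f^{\Lambda'}_{u', v'}$ of $\xi(\mathcal{G})$.

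I expect no genuine obstacle here: the content of the statement is purely that the quantity $(L^\Lambda_T(u), v)$ depends on the matrix coefficient only through its value as a function on $\mathcal{G}$, and not on the particular representation $\Lambda$ nor on the particular vectors $u, v$ used to realize it. The only point deserving an explicit word is that $T$ is well defined as a map with domain the function space $\xi(\mathcal{G})$, which is immediate from the definition of $\mathfrak{T}(\mathcal{G})$. This well-definedness is exactly what will later license treating $T$ as acting coherently on the functions themselves, so that the operators $L^\Lambda_T$ glue into a consistent assignment across all of $\mathcal{A}(\mathcal{G})$.
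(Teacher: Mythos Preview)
Your argument is correct for the statement as written: since $T$ is by definition a function on $\xi(\mathcal{G}) \subset \mathrm{C}(\mathcal{G})$ and the hypothesis says $f^\Lambda_{u,v}$ and $f^{\Lambda'}_{u',v'}$ are the same element of that space, applying $T$ and invoking \eqref{L-T-map} finishes immediately.

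The paper takes a different route: it first shows that $f^{\widetilde{\Lambda}}_{\widetilde{u},\widetilde{v}} \equiv 0$ forces $(L^{\widetilde{\Lambda}}_T(\widetilde{u}),\widetilde{v}) = T(\mathbf{0}) = 0$, and then applies this with $\widetilde{\Lambda} = \Lambda \oplus \Lambda'$, $\widetilde{u} = u \oplus u'$, $\widetilde{v} = v \oplus (-v')$, using property $b')$ to split the inner product. This is more elaborate than necessary for the lemma as stated, but the technique has a purpose: it is exactly the argument needed in Lemma~\ref{product-thm}, where one starts from an arbitrary family $(L^\Lambda)_\Lambda$ satisfying $a')$ and $b')$ and must show that the assignment $f^\Lambda_{u,v} \mapsto (L^\Lambda(u),v)$ is well defined \emph{before} any $T$ exists. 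Your direct argument, which relies on $T$ already being a single-valued map on $\xi(\mathcal{G})$, does not transfer to that situation. So your proof is cleaner here, while the paper's proof is really a rehearsal of the computation it will cite in the next lemma.
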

\begin{proof}
Take $\widetilde{\Lambda} \in \mathcal{A}(\mathcal{G})$ and assume that $f^{\widetilde{\Lambda}}_{\widetilde{u}, \widetilde{v}} \equiv 0$. Then, given any $T \in \mathfrak{T}(\mathcal{G})$, we obtain that $(L^{\widetilde{\Lambda}}_T(\widetilde{u}), \widetilde{v}) = T({\bf 0})$. Since $T({\bf 0}) = T({\bf 0})^2$ and $T({\bf 0}) = 0 \neq 1$, it follows that $(L^{\widetilde{\Lambda}}_T(\widetilde{u}), \widetilde{v}) = 0$. 

By hypothesis, we have $f^{\Lambda \oplus \Lambda'}_{u \oplus u', v \oplus (-v')} \equiv 0$. So, taking $\widetilde{\Lambda} = \Lambda \oplus \Lambda'$, $\widetilde{u} = u \oplus u'$ and $\widetilde{v} = v \oplus (-v')$, we obtain that $(L^\Lambda_T(u), v) = (L^{\Lambda'}_T(u'), v')$ such as we wanted to prove. 
\end{proof}

In the next lemma, we prove that any unitary transformation satisfying suitable conditions can be characterized in terms of elements belonging to $\mathfrak{T}(\mathcal{G})$ as a linear transformation satisfying the expression in \eqref{L-T-map}. 

\begin{lemma}
\label{product-thm}
Assume that for each $\Lambda \in \mathcal{A}(\mathcal{G})$ there exists $L^\Lambda \in \mathrm{U}(V_\Lambda)$ satisfying the properties $a')$ and $b')$ stated above. Then, there is $T \in \mathfrak{T}(\mathcal{G})$ such that $L^\Lambda_T = L^\Lambda$.
\end{lemma}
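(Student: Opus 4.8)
The plan is to define the candidate character directly by the formula forced by \eqref{L-T-map}, namely
\[
T(f^\Lambda_{u,v}) := (L^\Lambda(u), v), \qquad \Lambda\in\mathcal{A}(\mathcal{G}),\ u,v\in V_\Lambda .
\]
By condition c) every element of $\xi(\mathcal{G})$ is itself of the form $f^\Lambda_{u,v}$ (sums are absorbed by $\oplus$), so this determines $T$ on all of $\xi(\mathcal{G})$ as soon as the assignment is shown to be well defined; linearity is then automatic from the direct-sum clause of b'). Granting well-definedness, the remaining checks are routine and I would dispatch them at once: $T(\mathbf 1)=(L^{\Lambda_0}e,e)=(e,e)=1$ by a') (writing $\mathbf 1=f^{\Lambda_0}_{e,e}$ for the one-dimensional identity representation); multiplicativity $T(fg)=T(f)T(g)$ follows from the tensor clause of b') and condition c), since $f^\Lambda_{u,v}f^{\Lambda'}_{u',v'}=f^{\Lambda\otimes\Lambda'}_{u\otimes u',\,v\otimes v'}$ and $(L^{\Lambda\otimes\Lambda'}(u\otimes u'),v\otimes v')=(L^\Lambda u,v)(L^{\Lambda'}u',v')$; and $T(\overline f)=\overline{T(f)}$ follows from condition a) together with the $*$-clause of b'). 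Thus $T\in\mathfrak{T}(\mathcal{G})$, and comparison with \eqref{L-T-map} gives $L^\Lambda_T=L^\Lambda$, as desired.

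The entire difficulty is therefore concentrated in well-definedness: if $f^\Lambda_{u,v}=f^{\Lambda'}_{u',v'}$ as functions on $\mathcal{G}$, then $(L^\Lambda u,v)=(L^{\Lambda'}u',v')$. Exactly as in the proof of Lemma \ref{wd-lemma}, I would set $\widetilde\Lambda:=\Lambda\oplus\Lambda'$, $w:=u\oplus u'$, $z:=v\oplus(-v')$, so that the additivity clauses reduce the claim to the single implication
\[
f^{\widetilde\Lambda}_{w,z}\equiv 0 \quad\Longrightarrow\quad (L^{\widetilde\Lambda}(w),z)=0 .
\]
Note that Lemma \ref{wd-lemma} itself cannot be invoked here: there $L^{\widetilde\Lambda}$ was produced from an \emph{already given} $T$ via $T(\mathbf 0)=0$, whereas $T$ is precisely what we are building, so this implication must now be extracted intrinsically from a') and b').

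I would prove it by showing that $L^{\widetilde\Lambda}$ lies in the linear span $\mathcal{S}:=\mathrm{span}\{\widetilde\Lambda(\alpha):\alpha\in\mathcal{G}\}\subseteq\mathrm{End}(V_{\widetilde\Lambda})$; once this is known, the hypothesis $f^{\widetilde\Lambda}_{w,z}\equiv 0$ says that the functional $A\mapsto(Aw,z)$ annihilates every $\widetilde\Lambda(\alpha)$, hence all of $\mathcal{S}$, hence $L^{\widetilde\Lambda}$, giving the conclusion. Establishing $L^{\widetilde\Lambda}\in\mathcal{S}$ proceeds in two ingredients. First, b') forces the assignment $\Lambda\mapsto L^\Lambda$ to be \emph{natural}: for any $\Lambda$-stable $W\subseteq V_\Lambda$ the orthogonal splitting $\Lambda=\Lambda|_W\oplus\Lambda|_{W^\bot}$ (legitimate by Lemma \ref{reducible-lemma}) and the direct-sum clause of b') make $L^\Lambda$ block diagonal, so it preserves $W$; applying this to the graph subspaces of intertwiners shows that $L$ respects unitary equivalences and commutes with every self-intertwiner, i.e. $L^\Lambda\in(\mathrm{End}_{\mathcal{G}}\Lambda)'$. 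Second, I would reduce to the isotropy group $G:=\mathcal{G}_a^a$ by Lemmas \ref{lemma-H} and \ref{lemma-H2}: there $\{\lambda(g)\}$ is a unital self-adjoint subalgebra of $\mathrm{End}(V_\lambda)$, so by the bicommutant theorem $\mathrm{span}\{\lambda(g)\}=(\mathrm{End}_G\lambda)'$, and naturality places $L$ inside this span. Transporting back through the decomposition
\[
\widetilde\Lambda(\alpha)=\mu(r(\alpha))^{-1}\,\lambda(\Gamma(\alpha))\,\mu(s(\alpha))
\]
of Lemma \ref{lemma-H} is then meant to yield $L^{\widetilde\Lambda}\in\mathcal{S}$. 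I expect this last transport — controlling how the object-dependent twists $\mu(x)=\Lambda(\Omega(x))$ interact with the group-level span, so that the span of $\{\widetilde\Lambda(\alpha)\}$ (rather than the larger algebra it generates) already captures $L^{\widetilde\Lambda}$ — to be the genuinely delicate step and the main obstacle, and it is precisely where the connectedness and compactness of $\mathcal{G}$, through hypothesis {\bf (H)}, must be used.
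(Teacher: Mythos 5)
Your construction of $T$ and the verification that it belongs to $\mathfrak{T}(\mathcal{G})$ coincide with the paper's own proof, which consists precisely of defining $T(f^\Lambda_{u,v}) := (L^\Lambda(u),v)$, invoking Lemma \ref{wd-lemma} for well-definedness, and then checking $T({\bf 1})=1$, multiplicativity and $T(\overline{f})=\overline{T(f)}$ from $a')$ and $b')$. Your central observation — that Lemma \ref{wd-lemma} cannot literally be invoked here — is well taken and is the most valuable part of the proposal: that lemma concerns the operators $L^\Lambda_T$ attached to an \emph{already given} $T\in\mathfrak{T}(\mathcal{G})$, and its proof hinges on $T({\bf 0})=0$, which is unavailable when $T$ is the very object being constructed. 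So you have correctly isolated the one genuinely nontrivial point, which the paper passes over.

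However, your replacement argument does not close the gap, as you yourself concede. Two steps are missing, not one. First, the ``naturality'' of $\Lambda\mapsto L^\Lambda$ is not a consequence of $a')$ and $b')$ as stated: those conditions say nothing about intertwiners, and your block-diagonality argument already presupposes that $L$ respects the canonical identification $\Lambda\cong\Lambda|_W\oplus\Lambda|_{W^\perp}$, i.e.\ presupposes compatibility with at least some equivalences — which is what you are trying to derive. (The usual fix is to add equivariance with respect to morphisms of representations as an explicit axiom on the family $L^\Lambda$, or to argue on the function algebra via Peter--Weyl orthogonality, where linear independence of matrix coefficients of inequivalent irreducibles settles well-definedness directly.) Second, and more seriously, the transport from $\mathcal{G}_a^a$ back to $\mathcal{G}$ is exactly where the span argument breaks: for a groupoid the set $\{\widetilde\Lambda(\alpha):\alpha\in\mathcal{G}\}$ is not closed under products of non-composable pairs, so its linear span need not be an algebra, and the double commutant theorem only controls the algebra it generates — which may be strictly larger than the span on which the hypothesis $f^{\widetilde\Lambda}_{w,z}\equiv 0$ gives you vanishing. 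You flag this as ``the main obstacle'' and leave it open. As it stands, therefore, the proposal is a correct and useful critique of the paper's proof together with a programme for repairing it, but not a complete proof of Lemma \ref{product-thm}.
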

\begin{proof}
Given $L^\Lambda \in \mathrm{U}(V_\Lambda)$, define $T : \xi(\mathcal{G}) \to \mathbb{C}$ by $T(f^\Lambda_{u, v}) := (L^\Lambda(u), v)$ for each $u, v \in V_\Lambda$. By Lemma \ref{wd-lemma} the map $T$ is well defined and satisfies \eqref{L-T-map}. 

Besides that, conditions $a')$ and $b')$ stated above imply that $T$ is a linear homomorphism such that $T({\bf 1}) = 1$ and $T(\overline{f}) = \overline{T(f)}$, i.e., $T \in \mathfrak{T}(\mathcal{G})$.
\end{proof}

We induce the weak topology on $\mathfrak{T}(\mathcal{G})$ which is generated by neighborhoods of the unit $T_a$ as basic open sets of the form
\begin{equation*}
\{T \in \mathfrak{T}(\mathcal{G}) : |T(f_i) - T_a(f_i)| < \epsilon_i, a \in \mathcal{G}^{(0)}\} \;.
\end{equation*}

Furthermore, we have that $\mathfrak{T}(\mathcal{G}) \subset \prod_{f \in \xi(\mathcal{G})}\{z \in \mathbb{C} : |z| \leq \sup_{T \in \mathfrak{T}(\mathcal{G})}|T(f)|\}$  and the set $\prod_{f \in \xi(\mathcal{G})}\{z \in \mathbb{C} : |z| \leq \sup_{T \in \mathfrak{T}(\mathcal{G})}|T(f)|\}$ is compact by Tychonoff's theorem, because for any $f \in \xi(\mathcal{G})$ we have $\sup_{T \in \mathfrak{T}(\mathcal{G})}|T(f)| \leq \|f\|_\infty$. Moreover, by continuity of the conjugation map, it follows that $\mathfrak{T}(\mathcal{G})$ is closed and, thus, compact with the weak topology.

Our next goal is to define a product on $\mathfrak{T}(\mathcal{G})$ which induces a structure of groupoid on that set of linear homomorphisms under suitable conditions.

Given $T_1, T_2 \in \mathfrak{T}(\mathcal{G})$, we define the product $T_1 \cdot T_2$ by the expression
\begin{equation*}
L^\Lambda_{T_1 \cdot T_2} := L^\Lambda_{T_1} L^\Lambda_{T_2} \;,
\end{equation*}
where $\Lambda \in \mathcal{A}(\mathcal{G})$ is arbitrary. Observe that the product in \eqref{product-T} is well defined by Lemma \eqref{product-thm}, which also implies that $T_1 \cdot T_2 \in \mathfrak{T}(\mathcal{G})$. 

Next, we introduce the definition of a set of homomorphisms of algebras $\mathfrak{G}$ such that there exists an epimorphism from the groupoid $\mathcal{G}$ into $\mathfrak{G}$. Define
\begin{equation*}
I_a^b := \{f \in \xi(\mathcal{G}) : f|_{\mathcal{G}_a^b} \equiv 0\} \;.
\end{equation*}

It is not difficult to check that $I_a^b$ is an ideal of $\xi(\mathcal{G})$ for any pair $a, b \in \mathcal{G}^{(0)}$. Using that, we consider 
\begin{equation*}
\mathfrak{G} := \{(a, b, T) :\; a, b \in \mathcal{G}^{(0)},\; T \in \mathfrak{T}(\mathcal{G}) \text{ and } T(f) = 0 \; \forall f \in I_a^b\} \;,
\end{equation*}
equipped with the product topology and
\begin{equation*}
\mathfrak{G}_a^b := \{(a, b, T) :\; T \in \mathfrak{T}(\mathcal{G}) \text{ and } T(f) = 0 \; \forall f \in I_a^b\} \;.
\end{equation*}

Then, we have that each $\mathfrak{G}_a^b$ is a closed subspace of $\mathfrak{G}$ and 
\begin{equation*}
\mathfrak{G} := \bigcup_{a, b \in \mathcal{G}^{(0)}} \mathfrak{G}_a^b \;.
\end{equation*}

Observe that
\begin{align*}
\mathfrak{G} = \{(a, b, T) 
:&\; a, b \in \mathcal{G}^{(0)},\; T \in \mathfrak{T}(\mathcal{G}) \text{ and } (L^\Lambda_T(u), v) = 0,\; \forall \Lambda \in \mathcal{A},\\ 
&\forall u, v \in V_\Lambda \text{ such that } (\Lambda(\alpha)u, v) = 0,\; \text{ if } \alpha \in \mathcal{G}_x^y \}\; .
\end{align*}

The above, because $f^\Lambda_{u, v}(\alpha) = (\Lambda(\alpha)u, v)$ for all $\alpha \in \mathcal{G}$. Given $(a, b, T_1) \in \mathfrak{G}_a^b$ and $(b, c, T_2) \in \mathfrak{G}_b^c$, define
\begin{equation}
\label{product-T}
(a, b, T_1) \cdot (b, c, T_2) := (a, c, T_1 \cdot T_2) \;.
\end{equation}

Observe that the product $\cdot$ defined in \eqref{product-T} results continuous by the expression in \eqref{L-T-map}. In the next lemma, we prove that such a product induces a structure of topological groupoid on $\mathfrak{T}(\mathcal{G})$.

\begin{lemma}
Assume that $\mathcal{G}$ is a compact groupoid. Then, the product in \eqref{product-T} induces a structure of topological groupoid on the set $\mathfrak{T}(\mathcal{G})$.
\end{lemma}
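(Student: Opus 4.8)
The plan is to write down the groupoid data on $\mathfrak{G}$ explicitly and then check the axioms of Definition \ref{top} in turn, the only delicate points being that the partial product \eqref{product-T} and the inverse map stay inside $\mathfrak{G}$. As space of units I would take $\mathfrak{G}^{(0)} := \{(a,a,T_a) : a \in \mathcal{G}^{(0)}\}$, where $T_a(f) := f(a)$; since $L^\Lambda_{T_a} = \Lambda(a) = \mathrm{Id}_{V_\Lambda}$, applying the reformulation of $\mathfrak{G}$ with $\alpha = a \in \mathcal{G}_a^a$ (which already forces $(u,v) = 0$ in the hypothesis) shows $(a,a,T_a) \in \mathfrak{G}_a^a$. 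For $g = (a,b,T)$ I would set $r(g) := (a,a,T_a)$, $s(g) := (b,b,T_b)$, declare $(a,b,T_1)$ and $(c,d,T_2)$ composable exactly when $b = c$, and define $g^{-1} := (b,a,T^{-1})$, where $T^{-1} \in \mathfrak{T}(\mathcal{G})$ is the homomorphism with $L^\Lambda_{T^{-1}} := (L^\Lambda_T)^{-1} = (L^\Lambda_T)^*$; its existence comes from Lemma \ref{product-thm} once one checks that $\Lambda \mapsto (L^\Lambda_T)^{-1}$ satisfies $a')$ and $b')$, which is immediate from $(\mathrm{Id})^{-1} = \mathrm{Id}$, $(S \oplus R)^{-1} = S^{-1} \oplus R^{-1}$, $(S \otimes R)^{-1} = S^{-1} \otimes R^{-1}$ and $(\overline{S})^{-1} = \overline{S^{-1}}$.

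The heart of the argument is that these operations preserve membership in $\mathfrak{G}$. By the reformulation of $\mathfrak{G}$ displayed above, taking orthogonal complements in the finite-dimensional space $V_\Lambda$, the condition $(a,b,T) \in \mathfrak{G}$ is equivalent to $L^\Lambda_T w \in \mathrm{span}\{\Lambda(\alpha)w : \alpha \in \mathcal{G}_a^b\}$ for every $\Lambda \in \mathcal{A}(\mathcal{G})$ and every $w \in V_\Lambda$. Now suppose $(a,b,T_1) \in \mathfrak{G}_a^b$ and $(b,c,T_2) \in \mathfrak{G}_b^c$. Writing $L^\Lambda_{T_2} u = \sum_k c_k \Lambda(\beta_k) u$ with $\beta_k \in \mathcal{G}_b^c$ and applying $L^\Lambda_{T_1}$, each vector $L^\Lambda_{T_1}\Lambda(\beta_k)u$ lies in $\mathrm{span}\{\Lambda(\alpha)\Lambda(\beta_k)u : \alpha \in \mathcal{G}_a^b\} = \mathrm{span}\{\Lambda(\alpha \cdot \beta_k)u\}$; since $\mathcal{G}_a^b \cdot \mathcal{G}_b^c \subseteq \mathcal{G}_a^c$, it follows that $L^\Lambda_{T_1 \cdot T_2} u = L^\Lambda_{T_1} L^\Lambda_{T_2} u \in \mathrm{span}\{\Lambda(\gamma)u : \gamma \in \mathcal{G}_a^c\}$ for every $u$. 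Hence $(a,c,T_1 \cdot T_2) \in \mathfrak{G}_a^c$, so \eqref{product-T} maps $\mathfrak{G}_a^b \times \mathfrak{G}_b^c$ into $\mathfrak{G}_a^c$.

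For the inverse I would argue directly on the orthogonality form. Let $(a,b,T) \in \mathfrak{G}$, put $A := L^\Lambda_T$, and fix $u, v$ with $(\Lambda(\gamma)u, v) = 0$ for all $\gamma \in \mathcal{G}_b^a$. For each $\alpha \in \mathcal{G}_a^b$ we have $\alpha^{-1} \in \mathcal{G}_b^a$ and $\Lambda(\alpha^{-1}) = \Lambda(\alpha)^*$ by unitarity, whence $0 = (\Lambda(\alpha)^* u, v) = (u, \Lambda(\alpha)v)$, that is $(\Lambda(\alpha)v, u) = 0$ for every $\alpha \in \mathcal{G}_a^b$. Applying the defining condition of $(a,b,T) \in \mathfrak{G}$ to the pair $(v,u)$ gives $(Av, u) = 0$, and therefore $(L^\Lambda_{T^{-1}} u, v) = (A^{-1}u, v) = (A^* u, v) = (u, Av) = \overline{(Av, u)} = 0$. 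This shows $(b,a,T^{-1}) \in \mathfrak{G}_b^a$. Unitarity of $L^\Lambda_T$ together with the bijection $\alpha \mapsto \alpha^{-1}$ between $\mathcal{G}_a^b$ and $\mathcal{G}_b^a$ is exactly what makes the computation close up, and this closure of $\mathfrak{G}$ under the partial product and under inversion is the step I expect to be the main obstacle.

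Once closure is established the remaining verifications are routine and all reduce to identities among the unitaries $(L^\Lambda_{(\cdot)})_\Lambda$, using that an element of $\mathfrak{T}(\mathcal{G})$ is determined by the family $\{L^\Lambda_T\}_\Lambda$ through \eqref{L-T-map}. Associativity of \eqref{product-T} follows from associativity of operator composition; axiom (i) from $(L^\Lambda_{T^{-1}})^{-1} = L^\Lambda_T$; and axioms (iii)--(iv) from $L^\Lambda_{T^{-1} \cdot T} = A^{-1}A = \mathrm{Id} = L^\Lambda_{T_b}$ and $L^\Lambda_{T \cdot T^{-1}} = \mathrm{Id} = L^\Lambda_{T_a}$, giving $g^{-1} \cdot g = s(g)$ and $g \cdot g^{-1} = r(g)$. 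Finally, for continuity I would work in the weak topology on $\mathfrak{T}(\mathcal{G})$ together with the product topology on $\mathfrak{G} \subseteq \mathcal{G}^{(0)} \times \mathcal{G}^{(0)} \times \mathfrak{T}(\mathcal{G})$: continuity of the product was already recorded after \eqref{product-T} (concretely $(T_1 \cdot T_2)(f^\Lambda_{u,v}) = \sum_k T_1(f^\Lambda_{e_k, v})\, T_2(f^\Lambda_{u, e_k})$ is a finite sum of products of evaluations), while continuity of inversion follows from $T^{-1}(f^\Lambda_{u,v}) = \overline{T(f^\Lambda_{v,u})}$ and continuity of conjugation, the label maps $(a,b) \mapsto (b,a)$ and $((a,b),(b,c)) \mapsto (a,c)$ being trivially continuous.
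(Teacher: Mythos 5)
Your proposal is correct and, for the routine parts, follows the same skeleton as the paper: the units are the $(a,a,T_a)$ with $L^\Lambda_{T_a}=\mathrm{Id}_{V_\Lambda}$, the inverse is obtained by feeding $\Lambda\mapsto(L^\Lambda_T)^{-1}$ into Lemma \ref{product-thm}, associativity comes from associativity of operator composition, and continuity is read off from \eqref{L-T-map} and the weak topology. Where you genuinely diverge is in what you identify as ``the main obstacle'': the paper's proof verifies the groupoid axioms only at the level of the operators $L^\Lambda_T$ and never checks that the partial product \eqref{product-T} and the inversion actually land back in $\mathfrak{G}$, i.e.\ that $T_1\cdot T_2$ annihilates $I_a^c$ when $T_1$ annihilates $I_a^b$ and $T_2$ annihilates $I_b^c$, and that $T^{-1}$ annihilates $I_b^a$. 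Your reformulation of membership as $L^\Lambda_T w\in\mathrm{span}\{\Lambda(\alpha)w:\alpha\in\mathcal{G}_a^b\}$ (a correct double-orthocomplement argument in finite dimensions), the inclusion $\mathcal{G}_a^b\cdot\mathcal{G}_b^c\subseteq\mathcal{G}_a^c$, and the unitarity computation $(L^\Lambda_{T^{-1}}u,v)=\overline{(L^\Lambda_T v,u)}$ together with the bijection $\alpha\mapsto\alpha^{-1}$ between $\mathcal{G}_a^b$ and $\mathcal{G}_b^a$ supply exactly the closure statements the paper takes for granted. So your argument is not just an alternative route; it repairs a real omission, and the only thing the paper's version contains that yours does not emphasize is the explicit identity $T_\alpha\cdot T_\beta=T_{\alpha\cdot\beta}$, which is what makes $\phi$ a homomorphism in the subsequent theorem and is worth recording alongside your closure argument.
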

\begin{proof}
Consider $a \in \mathcal{G}^{(0)}$, then, $T_a(f^\Lambda_{u, v}) = (\Lambda(a)u, v) = (u, v)$ which implies that $L^\Lambda_{T_a} = \mathrm{Id}_{V_\Lambda}$ for any $\Lambda \in \mathcal{A}(\mathcal{G})$. So, by \eqref{product-T}, it follows that 
\begin{equation*}
(b, a, T_1) \cdot (a, a, T_a) = (b, a, T_1) \text{ and }
(a, a, T_a) \cdot (a, b, T_2) = (a, b, T_2) \;,
\end{equation*}
for any $T_1, T_2 \in \mathfrak{T}(\mathcal{G})$ and each $a, b \in \mathcal{G}^{(0)}$, i.e., $(a, a, T_a)$ is a unit of $\mathfrak{G}$ for each $a \in \mathcal{G}^{(0)}$. 

On the other hand, since $L^\Lambda_T \in \mathrm{U}(V_\Lambda)$ for each $T \in \mathfrak{T}(\mathcal{G})$, by Theorem \ref{product-thm}, we are able to define $T^{-1}$ as the only element in $\mathfrak{T}(\mathcal{G})$ associated to $(L^\Lambda_T)^{-1}$. Then, defining $(a, b, T)^{-1} := (b, a, T^{-1})$, we obtain that
\begin{equation*}
(b, a, T_1) \cdot (b, a, T_1)^{-1} = (b, b, \mathrm{Id}_{V_\Lambda}) \text{ and }
(a, b, T_2) \cdot (a, b, T_2)^{-1} = (a, a, \mathrm{Id}_{V_\Lambda})
\end{equation*}
for any $T_1, T_2 \in \mathfrak{T}(\mathcal{G})$ and each $a, b \in \mathcal{G}^{(0)}$. So, by the expression in \eqref{L-T-map} the map $^{-1}$ is continuous.

The associativity of the product in \eqref{product-T} follows immediately of the associativity that satisfies the composition of linear transformations.

Besides that, given any pair $(\alpha, \beta) \in \mathcal{G}^{(2)}$, we have
\begin{align*}
((L_{T_\alpha}^\Lambda L_{T_\beta}^\Lambda)(u), v)
&= (L_{T_\beta}^\Lambda(u), (L_{T_\alpha}^\Lambda)^*(v)) \\ 
&= (\Lambda(\beta)(u), (L_{T_\alpha}^\Lambda)^*(v)) \\
&= (L_{T_\alpha}^\Lambda(\Lambda(\beta)(u)), v) \\
&= (\Lambda(\alpha)(\Lambda(\beta)(u)), v) \\
&= (\Lambda(\alpha \cdot \beta)(u), v)
= (L_{T_{\alpha \cdot \beta}}(u), v) \;.
\end{align*}

That is, $T_\alpha \cdot T_\beta = T_{\alpha \cdot \beta}$. In other words, the elements $(r(\alpha), s(\alpha), T_\alpha)$ and $(r(\beta), s(\beta), T_\beta)$ can be composed if, and only if, $(\alpha, \beta) \in \mathcal{G}^{(2)}$. In fact, the former property guarantees that $\mathfrak{G}$ has at least as many units as the groupoid $\mathcal{G}$. 

The above guarantees that conditions $i)$, $ii)$, $iii)$, and $iv)$ stated at the beginning of Section \ref{topological-groupoids-section} hold, such as we wanted to prove.
\end{proof}

In the last one of the results presented in this section we prove a version of the so-called Tannaka's Theorem. The statement of the result is the following one.

\begin{theorem}
Suppose that $\mathcal{G}$ is a Hausdorff connected compact groupoid satisfying the hypothesis {\bf (H)} and consider the map $\phi : \mathcal{G} \to \mathfrak{G}$, given by the equation
\begin{equation*}
\phi(\alpha) := (r(\alpha), s(\alpha), T_\alpha) \;.
\end{equation*} 

Then, $\phi$ is an epimorphism of topological groupoids.
\end{theorem}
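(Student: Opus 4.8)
The plan is to check in turn that $\phi$ is a well-defined continuous homomorphism and then, as the substantial part, that it is surjective. Well-definedness is immediate: if $f\in I_{r(\alpha)}^{s(\alpha)}$ then $f$ vanishes on $\mathcal{G}_{r(\alpha)}^{s(\alpha)}\ni\alpha$, so $T_\alpha(f)=f(\alpha)=0$ and hence $(r(\alpha),s(\alpha),T_\alpha)\in\mathfrak{G}_{r(\alpha)}^{s(\alpha)}$. That $\phi$ is a homomorphism is exactly the identity $T_\alpha\cdot T_\beta=T_{\alpha\cdot\beta}$ proved in the previous lemma, combined with $r(\alpha\cdot\beta)=r(\alpha)$ and $s(\alpha\cdot\beta)=s(\beta)$; compatibility with inverses and units follows the same way. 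Continuity holds because $r,s$ are continuous and, in the weak topology on $\mathfrak{T}(\mathcal{G})$, each coordinate $\alpha\mapsto T_\alpha(f)=f(\alpha)$ is continuous since $f\in\xi(\mathcal{G})\subset\mathrm{C}(\mathcal{G})$.

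For surjectivity I would first reduce to the isotropy groups. Since $\mathcal{G}$ is connected, $\mathcal{G}_a^b\neq\emptyset$ for all $a,b\in\mathcal{G}^{(0)}$, so given $(a,b,T)\in\mathfrak{G}_a^b$ pick any $\gamma\in\mathcal{G}_a^b$. As $\phi$ is an inverse-preserving homomorphism, $\phi(\gamma)^{-1}=(b,a,T_{\gamma^{-1}})$, and the product $(a,b,T)\cdot(b,a,T_{\gamma^{-1}})=(a,a,T\cdot T_{\gamma^{-1}})$ lies in the isotropy group $\mathfrak{G}_a^a$. If I can realize this element as $\phi(\delta)=(a,a,T_\delta)$ for some $\delta\in\mathcal{G}_a^a$, then $\phi(\delta\cdot\gamma)=\phi(\delta)\cdot\phi(\gamma)=(a,b,(T\cdot T_{\gamma^{-1}})\cdot T_\gamma)=(a,b,T)$, so $\alpha:=\delta\cdot\gamma\in\mathcal{G}_a^b$ is the required preimage. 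Thus it suffices to prove $\phi|_{\mathcal{G}_a^a}:\mathcal{G}_a^a\to\mathfrak{G}_a^a$ is onto for each $a$.

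Both $\mathcal{G}_a^a$ and $\mathfrak{G}_a^a$ are Hausdorff compact groups (closed subspaces of the compact groupoids $\mathcal{G}$ and $\mathfrak{G}$, the compactness of $\mathfrak{G}$ coming from that of $\mathfrak{T}(\mathcal{G})$ and $\mathcal{G}^{(0)}$), and $F:=\phi|_{\mathcal{G}_a^a}$ is a continuous homomorphism. I would apply Lemma \ref{sur-main-lemma} with $G=\mathcal{G}_a^a$, $H=\mathfrak{G}_a^a$ and complete family $\mathcal{F}=\mathcal{A}(\mathcal{G}_a^a)$. Using hypothesis {\bf (H)} at the base point $a$ (so $\Omega(a)={\bf 1}_a$ and the homomorphism $\Gamma$ of Lemma \ref{lemma-H} restricts to the identity on $\mathcal{G}_a^a$), attach to each $\lambda\in\mathcal{F}$ the canonical extension $\Lambda:=\lambda\circ\Gamma\in\mathcal{A}(\mathcal{G})$ and set $\lambda'\big((a,a,T)\big):=L^\Lambda_T$. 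Since $\lambda\mapsto\lambda\circ\Gamma$ is functorial and $L_T$ respects the identity representation, $\oplus$, $\otimes$ and conjugation by properties $a')$–$b')$, conditions i)–iii) hold; condition iv) is immediate from $\lambda'(F(x))=L^\Lambda_{T_x}=\Lambda(x)=\lambda(x)$ for $x\in\mathcal{G}_a^a$; and the separation condition v) holds because any $T$ with $(a,a,T)\in\mathfrak{G}_a^a$ annihilates $I_a^a$, hence is determined by its values on the restrictions $f^{\Lambda}_{u,v}|_{\mathcal{G}_a^a}=f^\lambda_{u,v}$, and matrix coefficients separate the points of the compact group $\mathcal{G}_a^a$.

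The crux is condition vi), the matching $M_{\mathfrak{G}_a^a}(f^{\lambda'}_{u,v})=M_{\mathcal{G}_a^a}(f^\lambda_{u,v})$, which I expect to be the main obstacle. By Lemma \ref{dec-lemma} each side equals $(P\,u,v)$ with $P$ the orthogonal projection onto the fixed subspace $W_{\lambda'}$, respectively $W_\lambda$, so it is enough to prove $W_\lambda=W_{\lambda'}$. The inclusion $W_{\lambda'}\subseteq W_\lambda$ comes from evaluating at $T=T_x$, $x\in\mathcal{G}_a^a$. For the reverse I would first record that $L_T$ is natural: if $\psi:V_{\Lambda_1}\to V_{\Lambda_2}$ intertwines $\Lambda_1,\Lambda_2$, then $f^{\Lambda_2}_{\psi u,v}=f^{\Lambda_1}_{u,\psi^* v}$, so Lemma \ref{wd-lemma} forces $L^{\Lambda_2}_T\psi=\psi L^{\Lambda_1}_T$. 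Applying this to the inclusion of the trivial subrepresentation $W_\lambda\hookrightarrow V_\lambda$ and using $L^{\lambda_0}_T=\mathrm{Id}$ shows $L^\Lambda_T$ fixes every vector of $W_\lambda$, i.e. $W_\lambda\subseteq W_{\lambda'}$. This is what breaks the apparent circularity (that equal mean values look tantamount to surjectivity): the equality of fixed spaces is imposed purely representation-theoretically by the very construction of $\mathfrak{T}(\mathcal{G})$. With vi) verified, Lemma \ref{sur-main-lemma} gives that $F$ is onto, and the reduction above then yields surjectivity of $\phi$, so $\phi$ is an epimorphism of topological groupoids.
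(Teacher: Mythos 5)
Your proposal is correct and its overall architecture coincides with the paper's: both reduce surjectivity to the isotropy group $\mathcal{G}_a^a$ and then feed $G=\mathcal{G}_a^a$, $\mathcal{F}$ the restrictions of representations of $\mathcal{G}$, and $\lambda'(T):=L^\Lambda_T$ into Lemma \ref{sur-main-lemma}, with the crux being $W_\lambda=W_{\lambda'}$ and Lemma \ref{dec-lemma}. You differ in two places, and in both your version is the more careful one. First, your reduction to the isotropy group is an explicit translation argument: pick $\gamma\in\mathcal{G}_a^b$ (connectedness), multiply $(a,b,T)$ by $\phi(\gamma)^{-1}$ to land in $\mathfrak{G}_a^a$, and translate back; the paper instead invokes Lemma \ref{lemma-H2}, which is a statement about extending representations from $\mathcal{G}_x^x$ to $\mathcal{G}$, and leaves the actual reduction of surjectivity implicit. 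Second, and more substantially, you take $H=\mathfrak{G}_a^a$, the full set of homomorphisms annihilating $I_a^a$, which is the correct target; the paper sets $H:=\{T_\alpha:\alpha\in\mathcal{G}_a^a\}$, i.e.\ the image of $F$, which makes the surjectivity assertion formally vacuous and lets its verification of $W_\lambda\subseteq W_{\lambda'}$ go through only because every $T\in H$ is already a point evaluation. With the honest choice of $H$ that inclusion genuinely needs an argument, and you supply one: the naturality of $T\mapsto L^\Lambda_T$ with respect to intertwiners of the restricted representations, extracted from Lemma \ref{wd-lemma} (noting that for $T$ annihilating $I_a^a$ it suffices that the matrix coefficients agree on $\mathcal{G}_a^a$), applied to the inclusion $W_\lambda\hookrightarrow V_\Lambda$ of the trivial subrepresentation together with $L^{\Lambda_0}_T=\mathrm{Id}$. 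This is the step that breaks the apparent circularity, and it is the main thing your write-up adds over the paper's proof; the remaining verifications of conditions i)--v) of Lemma \ref{sur-main-lemma} match the paper's essentially line by line.
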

\begin{proof}
Fix $a \in \mathcal{G}^{(0)}$ and define $\mathfrak{G}_a^a := \{(a, a, T_\alpha) : \alpha \in \mathcal{G}_a^a\}$, by Lemma \ref{lemma-H2}, it is enough to show that the function $\phi |_{\mathcal{G}_a^a} : \mathcal{G}_a^a \to \mathfrak{G}_a^a$ is surjective.

Under the notation that appears in Lemma \ref{sur-main-lemma}, we consider $G := \mathcal{G}_a^a$, $F := \phi$ and $H := \{T_\alpha : \alpha \in \mathcal{G}_a^a\}$. Besides that, we define $\mathcal{F} := \{\Lambda|_{\mathcal{G}_a^a} : \Lambda \in \mathcal{A}(\mathcal{G}) \}$. 

It follows immediately that $\mathcal{F}$ is a complete family of unitary representations of $G$. Take $\Lambda_1, \Lambda_2 \in \mathcal{A}(\mathcal{G})$ such that $\Lambda_1|_G = \Lambda_2|_G$ and $V_{\Lambda_1} = V_{\Lambda_2} = V$. Since $f^{\Lambda_1}_{u, v}|_G = f^{\Lambda_2}_{u, v}|_G$, it follows that each $T \in H$ satisfies
\begin{equation*}
T(f^{\Lambda_1}_{u, v}) - T(f^{\Lambda_2}_{u, v}) = T(f^{\Lambda_1}_{u, v} - f^{\Lambda_2}_{u, v}) = 0 \;,
\end{equation*} 

So, for any $\lambda \in \mathcal{F}$, we can define $\lambda' \in \mathcal{A}(H)$ by the expression
\begin{equation*}
\lambda'(T) := L^\Lambda_T \;,
\end{equation*}
where $\Lambda \in \mathcal{A}(\mathcal{G})$ satisfies $\Lambda|_G = \lambda$ and $V_\Lambda = V_\lambda = V_{\lambda'}$.

Conditions $i)$, $ii)$ and $iii)$ of Lemma \ref{sur-main-lemma} follow from the properties $a')$ and $b')$ of the map $L^\Lambda_T$ stated above. Besides that, given $\Lambda \in \mathcal{A}(\mathcal{G})$, $\lambda = \Lambda|_G$, $a \in G$ and $u, v \in V_\Lambda$, it follows that
\begin{align*}
(\lambda'(F(a))u, v) 
= (L^\Lambda_{F(a)}(u), v) 
&= F(a)(f^\Lambda_{u, v}) \\
&= T_a(f^\Lambda_{u, v}) \\
&= f^\Lambda_{u, v}(a) 
= (\Lambda(a)u, v) 
= (\lambda(a)u, v) \;. 
\end{align*}

The above proves that $iv)$ of Lemma \ref{sur-main-lemma} holds. Besides that, for any $\Lambda \in \mathcal{A}(\mathcal{G})$, each $T \in H$ and any $u, v \in V_\Lambda$, we have
\begin{equation*}
T(f^\Lambda_{u, v}) = (L^\Lambda_T(u), v) = (\lambda'(T)u, v) = f^{\lambda'}_{u, v}(T) \;,
\end{equation*}
where $\lambda = \Lambda|_G$. So, condition $v)$ of Lemma \ref{sur-main-lemma} holds true.

In order to finish this proof, it is enough to show that $W_\lambda = W_{\lambda'}$ which implies that the condition $vi)$ of Lemma \ref{sur-main-lemma} holds. The above, as a consequence of Lemma \ref{dec-lemma}. Indeed, since $\lambda = \Lambda|_G$, it follows that
\begin{align*}
W_\lambda 
&= \{u \in V_\lambda : \lambda(a)u = u \text{ for all } a \in G \} \\
&= \{u \in V_\Lambda : \Lambda(a)u = u \text{ for all } a \in G \} \\
&= \{u \in V_\Lambda : f^\Lambda_{u, v}(a) = (u, v) \text{ for all } a \in G, \; v \in V_\Lambda \} \;.
\end{align*}

On the other hand, we have that 
\begin{align*}
W_{\lambda'}
&= \{u \in V_{\lambda'} : \lambda'(T)(u) = u \text{ for all } T \in H \} \\
&= \{u \in V_\Lambda : (L^\Lambda_T(u), v) = (u, v) \text{ for all } T \in H,\; v \in V_\Lambda \} \\
&= \{u \in V_\Lambda : T(f^\Lambda_{u, v}) = (u, v) \text{ for all } T \in H,\; v \in V_\Lambda \} \;.
\end{align*}

If $u \in W_\lambda$, it follows that $f^\lambda_{u, v}(a) = (u, v)$ for all $a \in G$ and any $v \in V_\lambda$ which implies that $f^\Lambda_{u, v}(a) = (u, v)$ because $f^\Lambda_{u, v}|_G = f^\lambda_{u, v}$. By the above, we obtain that $T(f^\Lambda_{u, v}) = (u, v)$ for any $T \in H$ and each $v \in V_\Lambda$. Then, it follows that $u \in W_{\lambda'}$.

On the other hand, if $u \in W_{\lambda'}$, we obtain that $T(f^\Lambda_{u, v}) = (u, v)$ for any $T \in H$ and all $v \in V_\Lambda$. In particular, the above implies that $f^\Lambda_{u, v}(a) = T_a(f^\Lambda_{u, v}) = (u, v)$ for each $a \in G$ and any $v \in V_\Lambda$. So, we obtain that $u \in W_\lambda$.
\end{proof}

\medskip

No new data were created or analysed in this study.


\begin{thebibliography}{10}


\bibitem{Ibort2}
F.~M. Ciaglia, F.~Di~Cosmo, A.~Ibort, and G.~Marmo.
\newblock  Schwinger's picture of quantum mechanics. 
\newblock {\em arXiv (2020)}

\bibitem{Ibort3}  
F.~M. Ciaglia, A.~Ibort, and G.~Marmo.
\newblock  Schwinger's picture of quantum mechanics: Groupoids.
\newblock {\em Int. J. Geom. Methods Mod. Phys.}, Vol. 16, No. 08, 1950119 (2019)

\bibitem{Ibort1}
F.~M. Ciaglia, A.~Ibort, and G.~Marmo.
\newblock Schwinger's picture of quantum mechanics {III}: The statistical interpretation.
\newblock {\em Int. J. Geom. Methods Mod. Phys.}, Vol. 16, No. 11, 1950165 (37 pages) (2019) 

\bibitem{Ibort}
F. M. Ciaglia, F. Di Cosmo, A. Ibort and G. Marmo. Quantum tomography and Schwinger's picture of quantum mechanics. J. Phys. A, 55, no. 27, Paper No. 274008, 23 pp  (2022)


\bibitem{MR548112}
A.~Connes.
\newblock Sur la th\'{e}orie non commutative de l'int\'{e}gration.
\newblock In {\em Alg\`ebres d'op\'{e}rateurs ({S}\'{e}m., {L}es
  {P}lans-sur-{B}ex, 1978)}, volume 725 of {\em Lecture Notes in Math.}, pages
  19--143. Springer, Berlin, 1979.

\bibitem{MR4378494}
G.~G. de~Castro, A.~O. Lopes, and G.~Mantovani.
\newblock Haar systems, {KMS} states on von {N}eumann algebras and
  {$C^\ast$}-algebras on dynamically defined groupoids and noncommutative
  integration.
\newblock In {\em Modeling, dynamics, optimization and bioeconomics {IV}},
  volume 365 of {\em Springer Proc. Math. Stat.}, pages 79--137. Springer,
  Cham, [2021] \copyright 2021.

\bibitem{zbMATH07092992}
A.~{Ibort} and M.~A. {Rodr\'{\i}guez}.
\newblock {\em {An introduction to groups, groupoids and their
  representations}}.
\newblock Boca Raton, FL: CRC Press/Science Publishers, 2020.

\bibitem{LCH} B-S Lin, H-L Chen and T- Heng,
Note on Connes spectral distances of qubits, arXiv (2020)


\bibitem{zbMATH03840637}
D.~{Kastler}.
\newblock {On A. Connes' noncommutative integration theory}.
\newblock {\em {Commun. Math. Phys.}}, 85:99--120, 1982.

\bibitem{MR2215776}
A.~Kumjian and J.~Renault.
\newblock K{MS} states on {$C^*$}-algebras associated to expansive maps.
\newblock {\em Proc. Amer. Math. Soc.}, 134(7):2067--2078, 2006.


\bibitem{Lat} 
F. Latremoliere,
\newblock {On Quantum groupoids},
\newblock {Notes on line - University of California, Berkeley}


\bibitem{zbMATH07352601}
A.~O. {Lopes} and J.~K. {Mengue}.
\newblock {Thermodynamic formalism for Haar systems in noncommutative
  integration: transverse functions and entropy of transverse measures}.
\newblock {\em {Ergodic Theory Dyn. Syst.}}, 41(6):1835--1863, 2021.

\bibitem{MR3993188}
A.~O. Lopes and E.~R. Oliveira.
\newblock Continuous groupoids on the symbolic space, quasi-invariant
  probabilities for {H}aar systems and the {H}aar-{R}uelle operator.
\newblock {\em Bull. Braz. Math. Soc. (N.S.)}, 50(3):663--683, 2019.




\bibitem{Ni} 
D. Nikshych,
\newblock {A duality theorem for quantum groupoids}. 
\newblock {New trends in Hopf algebra theory (La Falda, 1999), 237-243, Contemp. Math., 267, Amer. Math. Soc., Providence, RI, 2000}

\bibitem{Put} I. Putnam, Lecture Notes on $C^*$-algebras (2019)

\bibitem{Ren1} J. Renault, A Groupoid approach to $C^*$-algebras, Lecture Notes in
Mathematics 793, Springer-Verlag, (1980)

\bibitem{MR2536186}
J.~Renault.
\newblock {\em {$C^\star$}-algebras and dynamical systems}.
\newblock 27${^{{}}{\rm{o}}}$ Col\'{o}quio Brasileiro de Matem\'{a}tica.



\bibitem{Scha} 
P. Schaueberg, Tannaka duality for arbitrary Hopf algebras, Algebra Report, 66. Verlag Reinhard Fischer (1992).

\bibitem{Tan}   T. Tannaka, Dualitat der nicht-kommutativen bikompakten gruppen, Thohoku Math. Jour,  58 pp1-12 (1938)


\bibitem{MR14095}
K.~Yosida.
\newblock On the duality theorem of non-commutative compact groups.
\newblock {\em Proc. Imp. Acad. Tokyo}, 19:181--183, 1943.

\bibitem{MR1336382}
K.~Yosida.
\newblock {\em Functional analysis}.
\newblock Classics in Mathematics. Springer-Verlag, Berlin, 1995.
\newblock Reprint of the sixth (1980) edition.


\end{thebibliography}
\end{document}